\title{
  An Information-Theoretical Analysis of the Minimum Cost to Erase Information\thanks{Portions of this paper were presented at the 39th Symposium on Information Theory and Its Applications \cite{matsuta2016omc}, and at the 2017 IEEE Information Theory Workshop \cite{matsuta2017minimum}.}
}
\author{
  Tetsunao Matsuta\thanks{tetsu@ict.e.titech.ac.jp} \and Tomohiko Uyematsu\thanks{uematsu@ict.e.titech.ac.jp}  
}
\date{\empty}
\theoremstyle{definition}
\newtheorem{thm}{Theorem}%[chapter]
\newtheorem{lem}{Lemma}%[chapter]
\newtheorem{cor}{Corollary}%[chapter]
\newtheorem{defn}{Definition}%[chapter]
\newtheorem{rem}{Remark}%[chapter]
\newtheorem{ex}{Example}%[chapter]
\newcommand{\ra}{\rightarrow}
\newcommand{\teq}{\triangleq}
\newcommand{\A}{\alpha}
\newcommand{\B}{\beta}
\newcommand{\cS}{\mathcal{S}}
\newcommand{\cT}{\mathcal{T}}
\newcommand{\cU}{\mathcal{U}}
\newcommand{\cX}{\mathcal{X}}
\newcommand{\cY}{\mathcal{Y}}
\newcommand{\cZ}{\mathcal{Z}}
\newcommand{\argmintext}{\mathop\mathrm{argmin}\nolimits}
\newcommand{\plimsup}{\mathrm{p}\mathchar`-\!\mathop{\limsup}\limits}
\newcommand{\pliminf}{\mathrm{p}\mathchar`-\!\mathop{\liminf}\limits}
\newcommand{\bfU}{\mathbf{U}}
\newcommand{\bfX}{\mathbf{X}}
\newcommand{\bfY}{\mathbf{Y}}
\newcommand{\bfZ}{\mathbf{Z}}
\def\eqo#1{\overset{\mathrm{#1}}{=}}
\def\leqo#1{\overset{\mathrm{#1}}{\leq}}
\begin{document}
\maketitle

\renewcommand{\thefootnote}{\fnsymbol{footnote}}
\footnote[0]{The authors are with Dept.\ of Information and Communications Engineering, Tokyo Institute of Technology, Tokyo, 152-8552 Japan.}
\renewcommand{\thefootnote}{\arabic{footnote}}

{\small
  \textbf{SUMMARY}  
  We normally hold a lot of confidential information in hard disk drives and solid-state drives. When we want to erase such information to prevent the leakage, we have to overwrite the sequence of information with a sequence of symbols independent of the information. The overwriting is needed only at places where overwritten symbols are different from original symbols. Then, the cost of overwrites such as the number of overwritten symbols to erase information is important. In this paper, we clarify the minimum cost such as the minimum number of overwrites to erase information under weak and strong independence criteria. The former (resp.\ the latter) criterion represents that the mutual information between the original sequence and the overwritten sequence normalized (resp.\ not normalized) by the length of the sequences is less than a given desired value.

  \textbf{Key words:}
  data erasure, distortion-rate function, information erasure,  information spectrum, random number generation
}

\section{Introduction}
Since services and activities using various types of information have increased, we normally hold a lot of confidential information. For example, storage devices such as hard disk drives (HDDs), solid-state drives (SSDs) and USB flash drives of individuals and companies hold personal addresses, names, phone numbers, e-mail addresses, credit card numbers, etc. When we want to discard, refurbish or just increase the security of these devices, we will usually erase information to prevent the leakage.

% In a usual operating system, data deletion commands only remove references to data from the filesystem, and it actually left in storage devices. There are many tools which can recover such data, and it may be exploited by a malicious person.
In order to erase information, we have to overwrite the sequence of information with a sequence of symbols independent of the information. Commonly used methods of erasure are to overwrite information with uniform random numbers or repeated specific patterns such as all zeros and all ones. There are several standards \cite{usdept1995national,gutmann1996secure,schneier1996applied,usair1998air,usnatio2006special} to erase information. Although most of these standards propose to repeat overwriting many times, overwriting data once is adequate to erase information for modern storage devices (see, e.g., \cite[Section 2.3]{usnatio2006special}).

The overwriting is needed only at places where overwritten symbols are different from original symbols, e.g., 0 to 1 or 1 to 0 for binary sequences. If there are so many overwritten symbols, the overwriting damages devices, shortens the storage life and may also take write time. This is crucial for devices with a limited number of writes such as SSDs and USB flash drives. Thus, we want to reduce the number of overwritten symbols when we erase information. Here comes a natural question: ``What is the minimum number of overwritten symbols?''.

In this paper, we clarify the minimum cost such as the minimum number or time of overwrites to erase information. As we stated in the above, for a binary sequence, the overwriting occurs at places where overwritten symbols are different from original symbols. In this case, a proper measure of the cost is the Hamming distance between the original sequence and the overwritten sequence. From this point of view, the information erasure can be modeled by correlated sources as Fig.~\ref{fig:iemodel} which actually is a somewhat general model. In this model, sequences emitted from source 1 and source 2 represent confidential information and information to be erased, respectively. For example, source 1 and source 2 are regarded as a fingerprint and its quantized image, respectively. When two correlated sources are identical, the model corresponds to the above mentioned situation. As shown in this figure, the encoder can observe one of the sequences. The encoder outputs a sequence that represents the overwritten sequence. Here, we allow the encoder to observe a uniform random number of limited size to generate an independent sequence. Then, the cost can be measured by a function of the input source sequence and the output sequence of the encoder.

\begin{figure}[t]
\small
\centering
\begin{picture}(250, 70)
%\begin{picture}(250, 80)
 %width 225
 \put(0, 35){source 1 seq.}
 \put(10, 20){$Y^n$}

 \put(40, 22){\vector(-1, 0){15}}
 \put(42, 20){correlated}
 \put(86, 22){\vector(1, 0){15}}

 \put(85, 35){source 2 seq.}
 \put(105, 20){$X^n$}

 \put(120, 22){\vector(1, 0){20}}

 \put(140, 10){\framebox(40, 30){}}
 \put(145, 30){encoder}
 \put(155, 15){$f_{n}$}

 \put(160, 55){\vector(0, -1){15}}

 \put(54, 60){uniform random number}
 \put(150, 60){$U_{M_n}$}

 \put(180, 22){\vector(1, 0){20}}

 \put(195, 35){output}
 \put(205, 20){$\hat X^n$}

 \put(180, 75){cost function}
 \put(180, 60){$c_{n}(X^n,\hat X^n)$}

 \put(13, 2){\vector(0, 1){15}}
 \put(13, 2){\line(1, 0){25}}
 \put(40, 0){independent}
 \put(92, 2){\line(1, 0){118}}
 \put(210, 2){\vector(0, 1){15}}
\end{picture}
 \caption{Information Erasure Model}
 \label{fig:iemodel}
\end{figure}
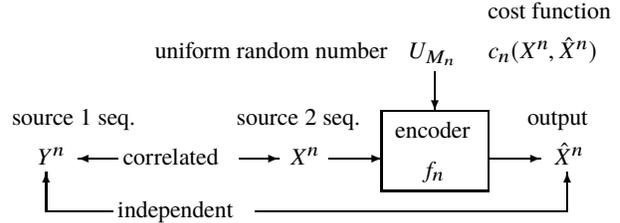

%%% Local Variables:
%%% mode: japanese-latex
%%% TeX-master: "template2"
%%% End:

For this information erasure model, we consider a \textit{weak} and a \textit{strong} independence criteria. The former (resp.\ the latter) criterion represents that the mutual information between the source sequence and the output sequence of the encoder normalized (resp.\ not normalized) by the length (blocklength) of sequences is less than a given desired value. For the weak independence criterion, we consider the \textit{average} cost and the \textit{worst-case} cost. The former cost represents the expectation of the cost with respect to the sequences. The latter cost represents the limit superior in probability \cite{hanspringerinformation} of the cost. Then, by using information-spectrum quantities \cite{hanspringerinformation}, we characterize the minimum average and the minimum worst-case costs for general sources, where the block length is unlimited.  For the strong independence criterion, by employing a stochastic encoder, we give a single-letter characterization of the minimum average cost for stationary memoryless sources, where the blocklength is unlimited. On the other hand, for the strong (same as the weak in this case) independence criterion, we also consider the non-asymptotic minimum average cost for a given finite blocklength. Then, we give a single-letter characterization of it for stationary memoryless sources. We show that the minimum average and the minimum worst-case costs can be characterized by the \textit{distortion-rate function} for the lossy source coding problem (see.\ e.g., \cite{hanspringerinformation}) when the two correlated sources are identical. This means that our problem setting gives a new point of view of the lossy source coding problem. We also show that for stationary memoryless sources, there exists a sufficient condition such that the optimal method of erasure from the point of view of the cost is to overwrite the source sequence with repeated identical symbols.

There are some related studies \cite{1056749,8007053} investigating a relationship between a cost and statistical independence of sequences. These studies deal with correlated two sequences (referred to as confidential sequence and public sequence in this paper) and consider systems that reveal a sequence (referred to as revealed sequence) related to the public sequence while keeping the confidential sequence secret. In \cite{1056749}, the public sequence is encoded to a codeword and is decoded to the revealed sequence. In \cite{8007053}, the public sequence is directly and randomly mapped to the revealed sequence. These studies adopt the mutual information\footnote{More precisely, the study \cite{1056749} adopts the conditional entropy of the confidential sequence given the codeword.} between the confidential sequence and the revealed sequence (or codeword in \cite{1056749}) in order to measure the independence. %and consider the distortion (i.e., cost) between the public sequence and the revealed sequence
Then, these studies give a trade-off between the mutual information normalized by the blocklength and the average distortion (i.e., cost) between the public sequence and the revealed sequence. We note that in these studies, the uniform random number of limited size is not assumed. Especially, in \cite{1056749}, the system reveals the sequence via a codeword without any auxiliary random number. Thus, system models in \cite{1056749} and \cite{8007053} are fundamentally different from our information erasure model. Moreover, these studies only consider sequences emitted from stationary memoryless sources and a certain limited distortion (cost) function. Thus, problem formulations in these studies are different especially from that for the weak independence criterion in our study. The problem formulation in the study \cite{8007053} is rather related to that for the strong independence criterion in which we consider a stochastic encoder and stationary memoryless sources.
% Thus, these problem settings are similar to that of our study. However, there are two major differences:
% 
% In \cite{1056749}, the weak independence criterion is adopted and the blocklength is unlimited. (In \cite{8007053}, there is a comment for the non-asymptotic case in Remark 1.) Furthermore, in \cite{8007053} and \cite{1056749},
However, in \cite{8007053} (and also \cite{1056749}), there is not any discussion about the optimality of the revealed sequence of repeated identical symbols which is important in the information erasure for comparison with a known method.

The rest of this paper is organized as follows. In Section \ref{sec: weak criterion}, we give some notations and formal definitions of the minimum average and the minimum worst-case costs under the weak independence criterion. Then, we characterize these costs for general sources. In Section \ref{sec: strong criterion}, we give the formal definition of the minimum average cost under the strong independence criterion. We also give the formal definition of the non-asymptotic minimum average cost. Then, we give a single-letter characterization of these costs and some results obtained from this characterization. In Section \ref{sec: proofs of theorems}, we show proofs for characterizations of minimum costs under the weak independence criterion. In Section \ref{sec: conclusion}, we conclude the paper.

\section{Minimum Costs to Erase Information under the Weak Independence Criterion}
\label{sec: weak criterion}
In this section, we consider the minimum average and the minimum worst-case costs under the weak independence criterion, and characterize these costs for general sources. We show some special cases of these costs in this section.

\subsection{Problem Formulation}
In this section, we provide the formal setting of the information erasure and define the minimum average and the minimum worst-case costs under the weak independence criterion.

Unless otherwise stated, we use the following notations throughout this paper (not just this section). The probability distribution of a random variable (RV) $X$ is denoted by the subscript notation $P_{X}$, and the conditional probability distribution for $X$ given an RV $Y$ is denoted by $P_{X|Y}$. The $n$-fold Cartesian product of a set $\cX$ is denoted by $\cX^n$ while an $n$-length sequence of symbols $(a_1,a_2,\cdots,a_n)$ is denoted by $a^n$. The sequence of RVs $\{X^n\}_{n=1}^\infty$ is denoted by the bold-face letter $\bfX$. Hereafter, $\log$ means the natural logarithm.

Let $\cX$, $\cY$ and $\hat \cX$ be finite sets, $M_n$ be a positive integer, and $\cU_{M_n}=\{1,2,\cdots,M_n\}$. Let $U_{M_n}$ be an RV uniformly distributed on $\cU_{M_n}$, and $(X^n,Y^n)$ be a pair of RVs on $\cX^n\times\cY^n$ such that $(X^n,Y^n)$ is independent of $U_{M_n}$. The pair $(\bfX,\bfY) = \{(X^n, Y^n)\}_{n = 1}^{\infty}$ of a sequence of RVs represents a pair of general sources \cite{hanspringerinformation} that is not required to satisfy the consistency condition.

For the information erasure model (Fig.\ \ref{fig:iemodel}), let $f_{n}:\cX^n\times \cU_{M_n} \ra \hat \cX^n$ be an encoder, and $c_{n}:\cX^n\times\hat \cX^n \ra [0,\infty)$ be a cost function satisfying
\begin{align*}
    \sup_{n \geq 1}\sup_{(x^n,\hat x^n)\in\cX^n\times\hat\cX^n}c_{n}(x^n,\hat x^n)
    \teq c_{\rm max} < \infty.
\end{align*}

We give two examples of the information erasure model to better understand it.
\begin{ex}
    \label{ex: example 1}
    Let a sequence $Y^n$ be confidential $n$-length binary data and be observed by some reading device, where we define $\cY \teq \{0, 1\}$. Let a sequence $X^n$ be the observed $n$-length binary data which is actually stored in a storage device, where we define $\cX \teq \{0, 1\}$. Now suppose that we can no longer read $Y^n$, but we can access the storage device and read the stored data $X^n$. Then, we want to overwrite $X^n$ to keep $Y^n$ secret. To this end, let us overwrite the data by all zero sequence. Then, we can define $\hat \cX \teq \{0,1\}$ and the encoder as $f_{n}(x^n, u) \teq (0,0,\cdots, 0)$ for any $x^n \in \cX^n$ and any $u \in \cU_{M_n}$. If we only overwrite a half of the data, i.e., we define the encoder as $f_{n}(x^n, u) \teq (x_1,x_2,\cdots, x_{n/2}, 0, 0, \cdots, 0)$ for any $x^n \in \cX^n$ and any $u \in \cU_{M_n}$, the output of the encoder is no longer independent of $Y^n$, but a cost may be reduced. Obviously, we can define a more complicated encoder as follows: Let $M_n = 2$ and
    \begin{align*}
        f_{n}(x^n, u) \teq
        \begin{cases}
            (0,0,\cdots, 0) & \mbox{ if } x_1 = 0, u = 1,\\
            (1,1,\cdots, 1) & \mbox{ if } x_1 = 1, u = 1,\\
            (1,1,\cdots, 1) & \mbox{ if } x_1 = 0, u = 2,\\
            (0,0,\cdots, 0) & \mbox{ if } x_1 = 1, u = 2.
        \end{cases}
    \end{align*}
    If we wish to count the number of overwrites of binary data, we define the cost function by the (normalized) hamming distance, i.e., $c_{n}(x^n, \hat x^n) \teq \frac{1}{n} \sum_{i = 1}^{n}\textbf{1}\{x_{i} \neq \hat x_{i}\}$, where $\mathbf{1}\{\cdot\}$ denotes the indicator function.
\end{ex}
\begin{ex}
    Let $Y^n$ be a confidential grayscale image with rather large $n$ dots, and $X^n$ be its quantized binary image\footnote{$0$ and $1$ represent black and white dots, respectively.} printed on a \textit{paper}, where we define $\cY \teq \{0, 1, 2, \cdots 255\}$ and $\cX \teq \{0, 1\}$. When we discard the paper of the binary image $X^n$, we modify\footnote{When shredding the paper into strips, it may be reassembled. Thus, we want to modify the original image.} it by using an eraser and a black ink pen in order to keep the grayscale image $Y^n$ secret. If the eraser can erase black dots clearly (probably the eraser or the black ink is special), the modified image is also a binary image. Thus, we can define $\hat \cX = \{0, 1\}$ and encoders as those in Example \ref{ex: example 1}. Suppose that the eraser is more expensive than the pen, and we pay $\alpha$ (yen, dollar, etc.)\ for writing a black dot and $2\alpha$ for erasing a black dot. Then, we may define the cost function as  $c_{n}(x^n,\hat x^n) \teq \frac{1}{n} \sum_{i=1}^{n} c(x_i,\hat x_i)$, where
    \begin{align*}
        c(x, \hat x) =
        \begin{cases}
            \alpha & \mbox{ if } (x, \hat x) = (1, 0),\\
            2\alpha & \mbox{ if } (x, \hat x) = (0, 1),\\
            0 & \mbox{ otherwise}.
        \end{cases}
    \end{align*}
\end{ex}

Before we show several definitions, we introduce the limit superior and the limit inferior in probability \cite{hanspringerinformation}.
\begin{defn}[Limit superior/inferior in probability]
    For an arbitrary sequence $\bfZ = \{Z^n\}_{n = 1}^{\infty}$ of real-valued RVs, we respectively define the limit superior and the limit inferior in probability by
    \begin{align*}
        \plimsup_{n\ra\infty} Z_n &\teq \inf\left\{\A : \lim_{n\ra \infty}\Pr\left\{Z_n > \A\right\} = 0 \right\},\\
        \pliminf_{n\ra\infty} Z_n &\teq \sup\left\{\B : \lim_{n\ra \infty}\Pr\left\{Z_n < \B\right\} = 0 \right\}.
    \end{align*}
\end{defn}

We define the worst-case cost by the limit superior in probability of the cost, i.e.,
\begin{align*}
    \plimsup_{n\ra\infty} c_{n}(X^n,f_{n}(X^n,U_{M_n})).
\end{align*}
Then, we introduce two types of achievability.
\begin{defn}
    \label{defn: epsilon-weakly achievability for average cost}
    For real numbers $R, \Gamma, \epsilon \geq 0$, we say $(R, \Gamma)$ is $\epsilon$-weakly achievable in the sense of  the average cost  if and only if there exist a sequence of integers $\{M_n\}_{n=1}^{\infty}$ and a sequence of encoders $\{f_{n}\}_{n=1}^\infty$ such that
    \begin{align}
        \limsup_{n\ra\infty}\frac{1}{n}\log M_n &\leq R, \label{def: equ: size of r num}\\
        \limsup_{n\ra\infty} \frac{1}{n}I(Y^n; f_{n}(X^n,U_{M_n})) & \leq \epsilon, \label{def: equ: weak criterion average}\\
        \limsup_{n\ra\infty}{\rm E}[c_{n}(X^n,f_{n}(X^n,U_{M_n}))] &\leq \Gamma, \notag
    \end{align} 
    where $I(X;Y)$ denotes the mutual information between RVs $X$ and $Y$, and ${\rm E}[\cdot]$ denotes the expectation.
\end{defn}
\begin{defn}
    \label{defn: epsilon-weakly achievability for worst-case cost}
    For real numbers $R, \Gamma, \epsilon \geq 0$, we say $(R, \Gamma)$ is $\epsilon$-weakly achievable in the sense of  the worst-case cost if and only if there exist a sequence of integers $\{M_n\}_{n=1}^{\infty}$ and a sequence of encoders $\{f_{n}\}_{n=1}^\infty$ such that
    \begin{align}
        \limsup_{n\ra\infty}\frac{1}{n}\log M_n &\leq R, \notag\\
        \limsup_{n\ra\infty} \frac{1}{n}I(Y^n;f_{n}(X^n,U_{M_n})) & \leq \epsilon, \label{def: equ: weak criterion worst-case}\\
        \plimsup_{n\ra\infty} c_{n}(X^n,f_{n}(X^n,U_{M_n})) &\leq \Gamma. \notag
    \end{align} 
\end{defn}
We adopt the mutual information normalized by the blocklength $n$ in these definitions (i.e., \eqref{def: equ: weak criterion average} and \eqref{def: equ: weak criterion worst-case}). This is a somewhat weak criterion of independence compared with the mutual information itself (not normalized by the blocklength). The stronger version of this criterion will be considered in the later section.

Now, we define the minimum average and the minimum worst-case costs under the weak independence criterion.
\begin{defn}
    We define the minimum average cost as
    \begin{align*}
        C_{\rm a}(\epsilon, R) &\teq \inf\{\Gamma: (R, \Gamma)\mbox{ is $\epsilon$-weakly achievable}\\
        &\quad \mbox{in the sense of the average cost}\}.
    \end{align*}
\end{defn}
\begin{defn}
    We define the minimum worst-case cost as
    \begin{align*}
        C_{\rm w}(\epsilon, R) &\teq \inf\{\Gamma: (R, \Gamma)\mbox{ is $\epsilon$-weakly achievable}\\
        &\quad \mbox{in the sense of the worst-case cost}\}.
    \end{align*}
\end{defn}

\subsection{Minimum Average and Minimum Worst-Case Costs}
In this section, we characterize the minimum average and the minimum worst-case costs. To this end, for given sequences $(\bfY, \bfX, \hat \bfX)$ of RVs, we define
\begin{align*}
    I(\bfY; \hat \bfX) &\teq \limsup_{n\ra\infty} \frac{1}{n}I(Y^n; \hat X^n),\\
    \overline{H}(\hat \bfX|\bfX) &\teq \plimsup_{n\ra\infty} \frac{1}{n}\log \frac{1}{P_{\hat X^n|X^n}(\hat X^n|X^n)},\\
    c(\bfX,\hat \bfX) &\teq \limsup_{n\ra\infty}{\rm E}[c_{n}(X^n,\hat X^n)],\\
    \overline{c}(\bfX,\hat \bfX) &\teq \plimsup_{n\ra\infty} c_{n}(X^n,\hat X^n),
\end{align*}
and denote by $\bfY - \bfX - \hat\bfX$ that the Markov chain $Y^n - X^n - \hat X^n$ holds for all $n \geq 1$.

For the minimum costs under the weak independence criterion, we have the following two theorems.
\begin{thm}
    \label{thm: weak average}
    For a pair of general sources $(\bfX,\bfY)$ and any real numbers $\epsilon, R\geq 0$, we have
    \begin{align*}
        C_{\rm a}(\epsilon, R) &= \inf_{\substack{\hat\bfX: \bfY - \bfX - \hat\bfX, \\  I(\bfY; \hat\bfX) \leq \epsilon, \overline{H}(\hat\bfX|\bfX) \leq R}} c(\bfX,\hat\bfX).
    \end{align*}
\end{thm}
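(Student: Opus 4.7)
The plan is to prove the equality via a converse and a direct coding theorem.

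For the converse, suppose $(R,\Gamma)$ is $\epsilon$-weakly achievable through some $\{M_n\}$ and $\{f_{n}\}$, and define $\hat X^n \teq f_{n}(X^n,U_{M_n})$. Since $U_{M_n}$ is independent of $(X^n,Y^n)$, the Markov chain $\bfY-\bfX-\hat\bfX$ holds, and $I(\bfY;\hat\bfX)\leq\epsilon$ and $c(\bfX,\hat\bfX)\leq\Gamma$ are restatements of the achievability conditions. The conditional-entropy constraint requires a separate observation: for each $x^n$, the map $u\mapsto f_{n}(x^n,u)$ attains at most $M_n$ distinct values in $\hat\cX^n$, and the uniformity of $U_{M_n}$ forces $P_{\hat X^n|X^n}(\hat x^n|x^n)\geq 1/M_n$ whenever positive. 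Hence $\frac{1}{n}\log\frac{1}{P_{\hat X^n|X^n}(\hat X^n|X^n)}\leq\frac{1}{n}\log M_n$ almost surely, and therefore $\overline{H}(\hat\bfX|\bfX)\leq R$, exhibiting a feasible $\hat\bfX$ with cost at most $\Gamma$.

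For the direct part, fix any feasible $\hat\bfX$ and a small $\delta>0$, and let $M_n \teq \lceil e^{n(R+\delta)}\rceil$. I plan a \emph{truncate-and-quantize} construction. Set $\cT_n\teq\{(x^n,\hat x^n): P_{\hat X^n|X^n}(\hat x^n|x^n)\geq e^{-n(R+\delta/2)}\}$ and $\cS(x^n)\teq\{\hat x^n:(x^n,\hat x^n)\in\cT_n\}$, so that $|\cS(x^n)|\leq e^{n(R+\delta/2)}<M_n$ for large $n$. The encoder $f_{n}$ allocates to each $\hat x^n\in\cS(x^n)$ exactly $\lfloor M_n P_{\hat X^n|X^n}(\hat x^n|x^n)\rfloor$ preimages in $\cU_{M_n}$ and maps the remaining indices to a fixed dummy symbol. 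Writing $\tilde X^n\teq f_{n}(X^n,U_{M_n})$, the induced distribution $P_{\tilde X^n|X^n}(\cdot|x^n)$ departs from $P_{\hat X^n|X^n}(\cdot|x^n)$ by at most $|\cS(x^n)|/M_n\leq e^{-n\delta/2}$ (rounding) plus the tail mass $\Pr\{\hat X^n\notin\cS(x^n)\mid X^n=x^n\}$. The hypothesis $\overline{H}(\hat\bfX|\bfX)\leq R$ guarantees $\Pr\{(X^n,\hat X^n)\notin\cT_n\}\to 0$, so the average total-variation distance $d_n$ between $P_{X^n,\tilde X^n}$ and $P_{X^n,\hat X^n}$ vanishes. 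Because both $\hat X^n$ (by hypothesis) and $\tilde X^n$ (by construction) are conditionally independent of $Y^n$ given $X^n$, this bound lifts verbatim to the joint distribution with $Y^n$.

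It remains to translate this closeness into the two asymptotic requirements. The cost bound $|\E[c_{n}(X^n,\tilde X^n)]-\E[c_{n}(X^n,\hat X^n)]|\leq c_{\rm max}\, d_n\to 0$ is immediate. For the mutual information I apply a Fannes-type continuity estimate to $H(\tilde X^n)$ and to $H(Y^n,\tilde X^n)$ on alphabets of log-cardinality at most $n\log(|\cY||\hat\cX|)$; dividing by $n$ yields a bound of the form $O(d_n\log(|\cY||\hat\cX|))+o(1)\to 0$, so $\limsup_n \frac{1}{n}I(Y^n;\tilde X^n)\leq I(\bfY;\hat\bfX)\leq\epsilon$. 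Letting $\delta\to 0$ and infimizing over $\hat\bfX$ closes the direct direction. The main obstacle I foresee is precisely this final conversion: because the alphabets are exponential in $n$ the Fannes constant grows linearly with $n$, and only the $1/n$ normalization of the weak criterion causes it to vanish with $d_n$. This slack is unavailable for the strong criterion in Section \ref{sec: strong criterion}, which is why a genuinely different stochastic-encoder construction is deployed there.
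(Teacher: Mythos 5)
Your proposal is correct and follows the same architecture as the paper's proof: the converse exhibits $\hat X^n=f_{n}(X^n,U_{M_n})$ as a feasible reproduction sequence and checks the three constraints, and the direct part chooses $M_n=\lceil e^{n(R+\delta)}\rceil$, simulates $P_{\hat X^n|X^n}$ from $(X^n,U_{M_n})$ up to vanishing variational distance, and transfers the cost and the \emph{normalized} mutual information by continuity in variational distance, finishing with a diagonal argument in $\delta$. The only differences are that you re-derive the two supporting lemmas inline rather than citing them. In the converse, you obtain $\overline{H}(\hat\bfX|\bfX)\leq R$ from the elementary observation that $P_{\hat X^n|X^n}(\hat x^n|x^n)\geq 1/M_n$ on its support, whereas the paper routes this through its Lemma \ref{lem: decreasing of the spectrum of function} (the information-spectrum analogue of $H(\varphi(U,X)|X)\leq H(U|X)$, itself proved via Lemma \ref{lem: a necessary condition}); your argument is shorter and equally valid for this special case. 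In the direct part, your truncate-and-allocate construction is essentially the proof of the paper's Lemma \ref{lem: sufficient condition of correlation} (Han's Theorem 2.1.1 adapted to conditional distributions), applied with $\underline{H}(\bfU|\bfX)=R+\delta>\overline{H}(\hat\bfX|\bfX)$; the hypothesis $\overline{H}(\hat\bfX|\bfX)\leq R$ indeed makes the tail mass outside $\cT_n$ vanish, exactly as you claim. Your closing observation about the Fannes constant growing linearly in $n$ and being absorbed only by the $1/n$ normalization matches the paper's own remark on why this proof does not extend to the strong independence criterion.
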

\begin{thm}
    \label{thm: weak worst-case}
    For a pair of general sources $(\bfX,\bfY)$ and any real numbers $\epsilon, R\geq 0$, we have
    \begin{align*}
        C_{\rm w}(\epsilon, R) &= \inf_{\substack{\hat\bfX: \bfY - \bfX - \hat\bfX, \\  I(\bfY; \hat\bfX) \leq \epsilon, \overline{H}(\hat\bfX|\bfX) \leq R}} \overline{c}(\bfX,\hat\bfX).
    \end{align*}
\end{thm}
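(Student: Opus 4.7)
I would prove Theorem~\ref{thm: weak worst-case} by the standard information-spectrum two-step (converse and direct), mirroring the strategy of Theorem~\ref{thm: weak average} with $\overline c$ replacing the expectation-based $c$ and $\plimsup$ replacing $\limsup\rmE$.

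\textbf{Converse} ($C_\rmw \geq \inf$)\textbf{.} Given any $\epsilon$-weakly achievable $(R,\Gamma)$ with codes $\{(M_n,f_n)\}$, set $\hat X^n \teq f_n(X^n,U_{M_n})$. Independence of $U_{M_n}$ from $(X^n,Y^n)$ gives the Markov chain $Y^n-X^n-\hat X^n$ for every $n$, while $I(\bfY;\hat\bfX)\leq\epsilon$ and $\overline c(\bfX,\hat\bfX)\leq\Gamma$ follow directly from \eqref{def: equ: weak criterion worst-case} and the worst-case cost condition. For the entropy-rate bound, whenever $\hat x^n$ lies in the image of $f_n(x^n,\cdot)$ one has $P_{\hat X^n|X^n}(\hat x^n|x^n) = |\{u:f_n(x^n,u)=\hat x^n\}|/M_n \geq 1/M_n$, so $\frac{1}{n}\log\frac{1}{P_{\hat X^n|X^n}(\hat X^n|X^n)} \leq \frac{1}{n}\log M_n$ almost surely, whence $\overline H(\hat\bfX|\bfX)\leq R$. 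Thus $\hat\bfX$ is feasible with $\overline c\leq\Gamma$, giving the converse.

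\textbf{Direct} ($C_\rmw \leq \inf$)\textbf{.} Fix any feasible $\hat\bfX$ and $\delta>0$. Set $M_n \teq \lceil e^{n(R+2\delta)}\rceil$ and define the truncated conditional support $\cT_n(x^n) \teq \{\hat x^n : P_{\hat X^n|X^n}(\hat x^n|x^n) \geq e^{-n(R+\delta)}\}$, which has at most $e^{n(R+\delta)} = M_n e^{-n\delta}$ elements. Build $f_n(x^n,\cdot)$ by partitioning $\{1,\ldots,M_n\}$ into bins of integer sizes approximating $M_n P_{\hat X^n|X^n}(\hat x^n|x^n)$ for $\hat x^n\in\cT_n(x^n)$ (leftover indices assigned arbitrarily inside $\cT_n(x^n)$), and set $\hat X'^n \teq f_n(X^n,U_{M_n})$. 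A standard quantization-plus-truncation estimate then yields
\[
\rmE\bigl[\|P_{\hat X'^n|X^n}(\cdot|X^n) - P_{\hat X^n|X^n}(\cdot|X^n)\|_{\mathrm{TV}}\bigr] \leq 2 e^{-n\delta} + 2\Pr(\hat X^n\notin\cT_n(X^n)) \to 0,
\]
the second probability vanishing because $\overline H(\hat\bfX|\bfX)\leq R<R+\delta$. Hence the total-variation distance between the joint laws of $(Y^n,\hat X^n)$ and $(Y^n,\hat X'^n)$ also vanishes, from which (i) $\plimsup_n c_n(X^n,\hat X'^n) \leq \overline c(\bfX,\hat\bfX)$ follows by direct comparison of tail probabilities; (ii) Fano-type continuity of entropy on the finite per-letter alphabet $\hat\cX$ gives $\frac{1}{n}|I(Y^n;\hat X'^n)-I(Y^n;\hat X^n)|\to 0$ and therefore $\limsup_n \frac{1}{n}I(Y^n;\hat X'^n)\leq\epsilon$; and (iii) the Markov chain $Y^n-X^n-\hat X'^n$ is automatic since $\hat X'^n$ is a function of $(X^n,U_{M_n})$ with $U_{M_n}$ independent of $Y^n$ given $X^n$. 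Hence $(R+2\delta,\overline c(\bfX,\hat\bfX))$ is $\epsilon$-weakly achievable for every $\delta>0$, and a diagonal construction along $\delta\downarrow 0$ absorbs the rate slack to give a single sequence of codes with $\limsup\frac{1}{n}\log M_n\leq R$, proving $C_\rmw(\epsilon,R)\leq\overline c(\bfX,\hat\bfX)$; infimizing over $\hat\bfX$ finishes.

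\textbf{Main obstacle.} The delicate step lies in the direct part: $\cT_n(x^n)$ must be \emph{large} enough that $\Pr(\hat X^n\notin\cT_n(X^n))\to 0$ (otherwise the in-probability cost bound is not inherited by the simulated $\hat X'^n$), yet \emph{small} enough relative to $M_n$ that the quantization residual $|\cT_n|/M_n$ decays exponentially. The threshold $e^{-n(R+\delta)}$ paired with $M_n\approx e^{n(R+2\delta)}$ balances both requirements at the price of a rate slack $2\delta$, which is ultimately removed by the diagonalization in $\delta\downarrow 0$. A secondary subtlety is establishing continuity of $\frac{1}{n}I(Y^n;\cdot)$ under a vanishing total-variation perturbation on an exponentially growing joint alphabet; here the finiteness of the per-letter alphabet $\hat\cX$ is essential, since it bounds the Fano-type correction by a quantity of order $\delta_n\log|\hat\cX|+o(1)\to 0$.
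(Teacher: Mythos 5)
Your proposal is correct and follows essentially the same route as the paper: a converse that bounds $\overline{H}(\hat\bfX|\bfX)$ by $\limsup_{n\ra\infty}\frac{1}{n}\log M_n$ (which the paper packages as the information-spectrum data-processing inequality of Lemma~\ref{lem: decreasing of the spectrum of function}), and a direct part that simulates the optimal conditional law from the uniform random number and transfers the independence, rate, and cost constraints through a vanishing variational distance, with the worst-case cost handled by comparing tail probabilities and the rate slack removed by diagonalization. The only difference is presentational: where the paper cites the resolvability lemmas adapted from Han's book, you re-derive the underlying truncation-plus-quantization construction explicitly.
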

Since proofs of theorems are rather long, we postpone these to Section \ref{sec: proofs of theorems}. The only difference of two theorems is using a function $c(\bfX,\hat\bfX)$ or $\overline{c}(\bfX,\hat\bfX)$.

According to \cite[Theorem 8 c), d), and e)]{verduhan1994gfc}, it holds that $\overline{H}(\hat \bfX|\bfX) \leq \log |\hat \cX|$. Hence, the following two corollaries follow immediately.
\begin{cor}
    \label{cor: weak average X = Y}
    When $\bfX=\bfY$ and $R\geq \log |\hat \cX|$, we have
    \begin{align*}
        C_{\rm a}(\epsilon, R) &= \inf_{\hat \bfX: I(\bfX; \hat \bfX) \leq \epsilon}
        c(\bfX,\hat \bfX).
    \end{align*}
\end{cor}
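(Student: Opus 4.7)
The plan is to deduce the corollary directly from Theorem \ref{thm: weak average} by showing that two of the three constraints in its infimum collapse to trivialities under the hypotheses $\bfX=\bfY$ and $R\geq \log|\hat\cX|$.

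First I would substitute $\bfY=\bfX$ into the expression given by Theorem \ref{thm: weak average}. The Markov condition $\bfY-\bfX-\hat\bfX$ becomes $\bfX-\bfX-\hat\bfX$, which holds automatically for every choice of $\hat\bfX$ (since the first and middle nodes coincide, no constraint is imposed on the conditional distribution of $\hat\bfX$ given $\bfX$). Hence this condition drops out of the infimum, and the mutual information condition $I(\bfY;\hat\bfX)\leq \epsilon$ becomes $I(\bfX;\hat\bfX)\leq \epsilon$, matching the condition in the corollary.

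Next I would handle the rate constraint $\overline{H}(\hat\bfX|\bfX)\leq R$. Invoking the cited bound \cite[Theorem 8 c), d), e)]{verduhan1994gfc} (already quoted in the text just before the corollary), one has $\overline{H}(\hat\bfX|\bfX)\leq \log|\hat\cX|$ for every $\hat\bfX$ with values in $\hat\cX$. Since the corollary assumes $R\geq \log|\hat\cX|$, this inequality is automatically satisfied and the rate constraint can be dropped from the infimum.

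After removing both the Markov and rate constraints, the infimum over $\hat\bfX$ in Theorem \ref{thm: weak average} reduces to
\begin{align*}
\inf_{\hat\bfX:\,I(\bfX;\hat\bfX)\leq \epsilon} c(\bfX,\hat\bfX),
\end{align*}
which is exactly the right-hand side of the corollary. There is no real obstacle here: the argument is a one-line specialization of Theorem \ref{thm: weak average} combined with the Verdú--Han upper bound on $\overline{H}(\hat\bfX|\bfX)$, and consequently the entire proof can be given in a couple of sentences.
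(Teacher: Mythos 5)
Your proposal is correct and is exactly the argument the paper intends: the text preceding the corollary cites the Verd\'u--Han bound $\overline{H}(\hat\bfX|\bfX)\leq\log|\hat\cX|$ precisely so that the rate constraint drops out when $R\geq\log|\hat\cX|$, and the Markov condition $\bfX-\bfX-\hat\bfX$ is indeed vacuous, so the corollary follows immediately from Theorem \ref{thm: weak average}.
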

\begin{cor}
    \label{cor: weak worst X = Y}
    When $\bfX=\bfY$ and $R\geq \log |\hat \cX|$, we have
    \begin{align*}
        C_{\rm w}(\epsilon, R) &= \inf_{\hat \bfX: I(\bfX; \hat \bfX) \leq \epsilon}
        \overline{c}(\bfX,\hat \bfX).
    \end{align*}
\end{cor}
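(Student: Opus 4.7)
The plan is to derive the corollary as an immediate specialization of Theorem \ref{thm: weak worst-case}. I would substitute $\bfY = \bfX$ into the variational expression for $C_{\rm w}(\epsilon, R)$ and then show that, under the rate hypothesis $R \geq \log|\hat\cX|$, two of the three defining constraints in that expression become vacuous, so that only the mutual-information constraint remains.

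First, the Markov condition $\bfY - \bfX - \hat\bfX$ reduces, after the substitution, to requiring $X^n - X^n - \hat X^n$ for every $n$. This always holds: once the middle variable is fixed, the first variable is determined (it equals the middle one), so the first and last variables are conditionally independent regardless of the joint distribution of $(X^n, \hat X^n)$. Hence this constraint can be dropped from the infimum without changing its value.

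Second, by \cite[Theorem 8 c), d), and e)]{verduhan1994gfc}, which the excerpt recalls just before the corollary, every sequence $\hat\bfX$ taking values in $\hat\cX^n$ satisfies $\overline{H}(\hat\bfX|\bfX) \leq \log|\hat\cX|$. The hypothesis $R \geq \log|\hat\cX|$ therefore forces $\overline{H}(\hat\bfX|\bfX) \leq R$ automatically for every admissible $\hat\bfX$, so this rate constraint can likewise be removed from the infimum.

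What remains after these two reductions is exactly
\begin{align*}
    \inf_{\hat\bfX: I(\bfX;\hat\bfX) \leq \epsilon} \overline{c}(\bfX, \hat\bfX),
\end{align*}
which is the claimed formula. There is no genuine obstacle in this argument, since all the heavy lifting has already been done in Theorem \ref{thm: weak worst-case} and in the cited Verd\'u--Han bound; the only point requiring momentary care is the degenerate Markov step, where one must note that a chain of the form $X - X - \hat X$ does qualify as a Markov chain, making the constraint truly vacuous rather than subtle.
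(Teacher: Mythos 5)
Your proposal is correct and matches the paper's own (implicit) argument: the paper derives this corollary from Theorem \ref{thm: weak worst-case} precisely by noting that the Verd\'u--Han bound $\overline{H}(\hat\bfX|\bfX) \leq \log|\hat\cX|$ makes the rate constraint vacuous when $R \geq \log|\hat\cX|$, and that setting $\bfY = \bfX$ trivializes the Markov condition. Nothing further is needed.
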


Right-hand sides of Corollaries \ref{cor: weak average X = Y} and \ref{cor: weak worst X = Y} can be regarded as the \textit{distortion-rate} function for the variable-length coding under the average distortion criterion (see, e.g., \cite[Remark 5.7.2]{hanspringerinformation}) and the maximum distortion criterion (see, e.g., the proof of \cite[Theorem 5.6.1]{hanspringerinformation}), respectively. This fact allows us to apply many results of the distortion-rate function to our study. For example, according to the proof of \cite[Theorem 5.8.1]{hanspringerinformation}, the minimum costs for stationary memoryless sources are given by the next corollary.
\begin{cor}
    \label{cor: weak cost for stationary memoryless}
    Let $\bfX=\bfY$ and $R\geq \log |\hat \cX|$. Further, let $\bfX$ be a stationary memoryless source induced by an RV $X$ on $\cX$, and $c_{n}:\cX^n\times\hat \cX^n \ra [0,\infty)$ be an additive cost function defined by
    \begin{align*}
        c_{n}(x^n,\hat x^n) \teq \frac{1}{n} \sum_{i=1}^{n} c(x_i,\hat x_i), 
    \end{align*}
    where $c:\cX\times\hat \cX \ra [0,\infty)$. Then, we have
    \begin{align*}
        C_{\rm a}(\epsilon, R) = C_{\rm w}(\epsilon, R) = \min_{\hat X: I(X; \hat X) \leq \epsilon}
        {\rm E}[c(X,\hat X)].
    \end{align*}
\end{cor}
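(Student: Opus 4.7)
The plan is to apply Corollaries~\ref{cor: weak average X = Y} and \ref{cor: weak worst X = Y} to recast $C_{\rm a}(\epsilon,R)$ and $C_{\rm w}(\epsilon,R)$ as infima over processes $\hat\bfX$ satisfying $I(\bfX;\hat\bfX)\leq\epsilon$, and then carry out the classical single-letterization argument (cf.\ the proof of \cite[Theorem 5.8.1]{hanspringerinformation}). Writing $D(\epsilon)\teq\min_{\hat X:\,I(X;\hat X)\leq\epsilon}\E[c(X,\hat X)]$, it suffices to (i) exhibit one admissible $\hat\bfX$ with $c(\bfX,\hat\bfX)=\overline{c}(\bfX,\hat\bfX)=D(\epsilon)$, and (ii) show $c(\bfX,\hat\bfX)\geq D(\epsilon)$ for every admissible $\hat\bfX$. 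Combined with the boundedness-based estimate $\E[c_n]\leq(\overline{c}(\bfX,\hat\bfX)+\delta)+c_{\rm max}\Pr\{c_n>\overline{c}(\bfX,\hat\bfX)+\delta\}$ and $\delta\downarrow 0$, which yields $c(\bfX,\hat\bfX)\leq\overline{c}(\bfX,\hat\bfX)$, this also gives $\overline{c}(\bfX,\hat\bfX)\geq D(\epsilon)$.

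For (i), pick a single-letter minimizer $\hat X^{\ast}$, which exists by compactness of the simplex on the finite alphabet $\hat\cX$, and let $\hat\bfX$ be obtained by applying $P_{\hat X^{\ast}|X}$ independently to each coordinate of $\bfX$. Memorylessness gives $I(X^n;\hat X^n)=nI(X;\hat X^{\ast})\leq n\epsilon$, hence $I(\bfX;\hat\bfX)\leq\epsilon$. The additive cost $c_n(X^n,\hat X^n)=\frac{1}{n}\sum_{i=1}^n c(X_i,\hat X_i)$ is an empirical average of i.i.d.\ bounded RVs, so the weak law of large numbers yields $c_n\to\E[c(X,\hat X^{\ast})]=D(\epsilon)$ in probability, i.e.\ $\overline{c}(\bfX,\hat\bfX)=D(\epsilon)$, while linearity of expectation gives $c(\bfX,\hat\bfX)=D(\epsilon)$.

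For (ii), fix any admissible $\hat\bfX$. The chain rule together with independence of $X_1,\ldots,X_n$ yields $I(X^n;\hat X^n)\geq\sum_{i=1}^n I(X_i;\hat X_i)$. Let $R_c(D)\teq\min_{\hat X:\,\E[c(X,\hat X)]\leq D}I(X;\hat X)$ denote the single-letter rate-distortion function, which is convex and non-increasing. By Jensen's inequality,
\begin{align*}
    \frac{1}{n}I(X^n;\hat X^n)
    \geq \frac{1}{n}\sum_{i=1}^n R_c(\E[c(X_i,\hat X_i)])
    \geq R_c\!\left(\E[c_n(X^n,\hat X^n)]\right).
\end{align*}
Extracting a subsequence along which $\E[c_n]\to c(\bfX,\hat\bfX)=\limsup_n\E[c_n]$ and invoking continuity of $R_c$ on the interior of its effective domain, one obtains $R_c(c(\bfX,\hat\bfX))\leq\epsilon$, which is equivalent to $c(\bfX,\hat\bfX)\geq D(\epsilon)$.

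The only delicate point is this continuity step: $R_c$ is a priori only convex and lower-semicontinuous. Finiteness of $\cX$ and $\hat\cX$ together with boundedness of $c$ ensure that $R_c$ is continuous in the interior of its effective domain, and the endpoint case is handled by a routine perturbation (relaxing the constraint to $\epsilon+\delta$ and invoking monotonicity of $D(\cdot)$ as $\delta\downarrow 0$).
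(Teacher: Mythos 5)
Your proposal is correct and follows essentially the same route as the paper, which proves this corollary by reducing to Corollaries~\ref{cor: weak average X = Y} and \ref{cor: weak worst X = Y} and then invoking the single-letterization argument in the proof of \cite[Theorem 5.8.1]{hanspringerinformation} — precisely the i.i.d.\ test-channel construction with the law of large numbers for the direct part and the chain rule plus convexity of the rate-distortion function for the converse that you spell out. The details you fill in (including the continuity caveat for $R_c$ and the bound $c(\bfX,\hat\bfX)\leq\overline{c}(\bfX,\hat\bfX)$) are exactly what the paper delegates to that reference.
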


We also consider a mixed source $\bfX$ of two sources $\bfX_1$ and $\bfX_2$ defined by
\begin{align*}
    P_{X^n}(x^n) = \alpha P_{X_1^n}(x^n) + (1-\alpha) P_{X_2^n}(x^n),
    % \label{equ: def of mixed source}
\end{align*}
where $\alpha \in [0, 1]$. According to \cite[Remark 5.10.2]{hanspringerinformation}, we have the next corollary.
\begin{cor}
    Let $\bfX=\bfY$ and $R\geq \log |\hat \cX|$. For a subadditive cost function $\tilde c_{n}:\cX^n\times\hat \cX^n \ra [0,\infty)$ that satisfies
    \begin{align*}
        \tilde c_{n + m}((x_{1}^n, x_{2}^m), (\hat x_{1}^n, \hat x_{2}^m)) \leq \tilde c_{n}(x_{1}^n, \hat x_{1}^n) + \tilde c_{m}(x_{2}^m, \hat x_{2}^m),
    \end{align*}
    let $c_{n}(x^n, \hat x^n) = \frac{1}{n} \tilde c_{n}(x^n, \hat x^n)$ and $C_{\rm a}(\epsilon, R|\bfX)$ be the minimum average cost when $\bfX=\bfY$. Then, for a mixed source $\bfX$ of two stationary sources $\bfX_{1}$ and $\bfX_{2}$, we have
    \begin{align*}
        &C_{\rm a}(\epsilon, R|\bfX)\\
        & = \inf_{\substack{(\epsilon_1,\epsilon_2) \in [0, \infty)^{2}: \\ \alpha \epsilon_1 + (1-\alpha) \epsilon_2 \leq \epsilon}} \left(\alpha C_{\rm a}(\epsilon_1, R|\bfX_1) + (1-\alpha) C_{\rm a}(\epsilon_2, R|\bfX_2)\right).
    \end{align*}
\end{cor}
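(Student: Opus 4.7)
The plan is to reduce the problem via Corollary \ref{cor: weak average X = Y}, which applies because $R\geq \log|\hat\cX|$ and $\bfX=\bfY$, and then to tie the mixed-source optimization to the two component optimizations through an ``apply the same channel everywhere'' device. Concretely, for any channel $Q_{\hat X^n|X^n}$ applied simultaneously to $X^n$, $X_1^n$, and $X_2^n$, the induced joint distribution on $(X^n,\hat X^n)$ is exactly the mixture $\alpha P_{X_1^n\hat X_1^n}+(1-\alpha)P_{X_2^n\hat X_2^n}$. Linearity of expectation then yields the cost identity
\begin{equation*}
{\rm E}[c_n(X^n,\hat X^n)] = \alpha\,{\rm E}[c_n(X_1^n,\hat X_1^n)] + (1-\alpha)\,{\rm E}[c_n(X_2^n,\hat X_2^n)],
\end{equation*}
while introducing a mode indicator $J$ with $P(J=1)=\alpha$ and exploiting the Markov chain $J-X^n-\hat X^n$ yields the information sandwich
\begin{equation*}
\alpha I(X_1^n;\hat X_1^n)+(1-\alpha)I(X_2^n;\hat X_2^n) \leq I(X^n;\hat X^n) \leq \alpha I(X_1^n;\hat X_1^n)+(1-\alpha)I(X_2^n;\hat X_2^n) + \log 2,
\end{equation*}
in which the defect $\log 2$ disappears after normalizing by $n$ and taking the limit.

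For the direction $\leq$ (achievability), I would fix $(\epsilon_1,\epsilon_2)$ with $\alpha\epsilon_1+(1-\alpha)\epsilon_2\leq\epsilon$, pick near-optimizing channels $Q^{(1)},Q^{(2)}$ for the component problems on $\bfX_1,\bfX_2$, and stitch them into a single channel for the mixed source by first identifying the likely mode from $X^n$ via a stationarity-based separating set $A_n\subset\cX^n$ with $P_{X_1^n}(A_n)\to 1$ and $P_{X_2^n}(A_n)\to 0$, and then applying $Q^{(1)}$ on $A_n$ and $Q^{(2)}$ on $A_n^c$. Subadditivity of $\tilde c_n$ together with the boundedness $c_n\leq c_{\rm max}$ absorbs the negligible misclassification contribution, and the two identities above then deliver the target bounds on $c(\bfX,\hat\bfX)$ and $I(\bfX;\hat\bfX)$.

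For the direction $\geq$ (converse), given a feasible $\hat\bfX$ for $\bfX$ I would push its channel through $\bfX_1$ and $\bfX_2$ to define $\hat\bfX_1,\hat\bfX_2$ and invoke the left half of the sandwich. The hard part is that $\limsup$ does not commute with a convex combination, so $I(\bfX;\hat\bfX)\leq\epsilon$ does not split directly as $\alpha I(\bfX_1;\hat\bfX_1)+(1-\alpha)I(\bfX_2;\hat\bfX_2)\leq\epsilon$. The standard workaround I would employ is a double subsequence extraction: first take $\{n_k\}$ along which ${\rm E}[c_{n_k}(X^{n_k},\hat X^{n_k})]\to c(\bfX,\hat\bfX)$, and then a further subsequence on which all four per-letter quantities ${\rm E}[c_{n_k}(X_i^{n_k},\hat X_i^{n_k})]$ and $I(X_i^{n_k};\hat X_i^{n_k})/n_k$ converge to limits $(c_i,\iota_i)$. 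On this refined subsequence every $\limsup$ becomes a limit, the decomposition splits exactly into $\alpha\iota_1+(1-\alpha)\iota_2\leq\epsilon$ and $\alpha c_1+(1-\alpha)c_2=c(\bfX,\hat\bfX)$, and setting $\epsilon_i=\iota_i$ together with a lower bound $c_i\geq C_{\rm a}(\iota_i,R|\bfX_i)$ obtained from an auxiliary component channel (constructed by reusing the $n_k$-th conditional on the first $n_k$ coordinates and padding with a constant channel on the remaining ones, with subadditivity of $\tilde c_n$ keeping the cost under control) completes the argument.
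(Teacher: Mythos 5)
The paper itself disposes of this corollary by combining Corollary \ref{cor: weak average X = Y} with a direct citation of \cite[Remark 5.10.2]{hanspringerinformation}, so your task was to reconstruct that mixed-source decomposition from scratch. Your skeleton is the right one: the reduction to $\inf_{\hat\bfX:I(\bfX;\hat\bfX)\le\epsilon}c(\bfX,\hat\bfX)$, the identity $P_{X^n}Q=\alpha P_{X_1^n}Q+(1-\alpha)P_{X_2^n}Q$ for a common test channel, and the sandwich $I(X^n;\hat X^n)=I(J;\hat X^n)+\alpha I(X_1^n;\hat X_1^n)+(1-\alpha)I(X_2^n;\hat X_2^n)$ with $I(J;\hat X^n)\le\log 2$ are all correct and are exactly the standard information-spectrum tools here. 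But both halves of your argument have a genuine gap.

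In the achievability direction, your construction hinges on a separating set $A_n$ with $P_{X_1^n}(A_n)\to1$ and $P_{X_2^n}(A_n)\to0$. Stationarity does not provide this: two distinct stationary sources need not be asymptotically mutually singular (they may share ergodic components in their ergodic decompositions, and the corollary even covers $\bfX_1=\bfX_2$, where the formula survives only because each $C_{\rm a}(\cdot,R|\bfX_i)$ is convex in $\epsilon_i$ --- an argument you never invoke). So the ``identify the mode, then apply $Q^{(1)}$ or $Q^{(2)}$'' device fails in general; some other mechanism (mode-independent blockwise test channels together with convexity/subadditivity of the per-blocklength optima) is needed. Even where $A_n$ exists, controlling $I(X_1^n;\hat X^n)$ for the stitched channel requires comparing the law of $X_1^n$ conditioned on $A_n$ with $P_{X_1^n}$, which you gloss over.

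In the converse, the double subsequence extraction giving $\alpha c_1+(1-\alpha)c_2=c(\bfX,\hat\bfX)$ and $\alpha\iota_1+(1-\alpha)\iota_2\le\epsilon$ is fine, but the remaining step $c_i\ge C_{\rm a}(\iota_i,R|\bfX_i)$ is not delivered by your padding construction: interpolating from blocklength $n_k$ to $n$ with a constant channel contributes a term of order $\frac{n-n_k}{n}c_{\rm max}$, which does not vanish unless the subsequence satisfies $n_{k+1}/n_k\to1$, and the subsequence is dictated by where the various $\limsup$s are attained, so you cannot assume this. The standard repair --- and where stationarity and subadditivity of $\tilde c_n$ are actually used --- is to show via Fekete's lemma that $n\,D_n^{(i)}(\epsilon)$ is subadditive, hence $C_{\rm a}(\epsilon,R|\bfX_i)=\lim_n D_n^{(i)}(\epsilon)=\inf_n D_n^{(i)}(\epsilon)$ where $D_n^{(i)}(\epsilon)\teq\min_{\hat X^n:\frac1nI(X_i^n;\hat X^n)\le\epsilon}\frac1n\mathrm{E}[\tilde c_n(X_i^n,\hat X^n)]$; then $c_i\ge\liminf_k D_{n_k}^{(i)}(\iota_i+\delta)\ge C_{\rm a}(\iota_i+\delta,R|\bfX_i)$ with $\delta\downarrow0$. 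Without that identification (or an equivalent), neither direction of your proof closes.
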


\section{Minimum Costs to Erase Information under the Strong Independence Criterion}
\label{sec: strong criterion}
In this section, we consider the minimum average cost under the strong independence criterion. In order to clarify the fundamental limit of average costs, we assume that an encoder is a stochastic encoder in this section. In other words, we consider the case where the size of the uniform random number is sufficiently large. We also assume that a source is a stationary memoryless source. Then, we give a single-letter characterization of the minimum average cost and some results obtained from this characterization.
% We do not restrict the cardinality of source alphabets to finite in this section.

\subsection{Problem Formulation}
In this section, we define minimum average cost under the strong independence criterion.

Let $(\bfX,\bfY)$ be the pair of stationary memoryless sources, i.e., $\{(X_i,Y_i)\}_{n=1}^\infty$ be independent copies of a pair of RVs $(X,Y)$ on $\cX\times\cY$. For the sake of brevity, we simply express the sources as $(X,Y)$. Let $f_{n}:\cX^n \ra \hat \cX^n$ be a \textit{stochastic} encoder, and $c_{n}:\cX^n\times\hat \cX^n \ra [0,\infty)$ be an additive cost function as defined in Corollary \ref{cor: weak cost for stationary memoryless}, i.e., $c_{n}(x^n,\hat x^n) \teq \frac{1}{n} \sum_{i=1}^{n} c(x_i,\hat x_i)$, where $c:\cX\times\hat \cX \ra [0,\infty)$ is an arbitrary function.

The achievablility under the strong independence criterion is defined as follows.
\begin{defn}
    \label{defn: strong achievability}
    For real numbers $\Gamma, \epsilon \geq 0$, we say $\Gamma$ is $\epsilon$-strongly achievable in the sense of the average cost if and only if there exists a sequence of stochastic encoders $\{f_{n}\}_{n=1}^\infty$ such that
    \begin{align}
        \limsup_{n\ra\infty} I(Y^n ; f_{n}(X^n)) &\leq \epsilon, \label{def: equ: strong criterion}\\
        \limsup_{n\ra\infty}{\rm E}[c_{n}(X^n,f_{n}(X^n))] &\leq \Gamma, \notag
    \end{align}
    where the expectation is with respect to the sequence $X^n$ and the output of the stochastic encoder $f_{n}$.
\end{defn}
The difference from the previous section is to use the strong independence criterion in \eqref{def: equ: strong criterion}.

The minimum average cost under the strong independence criterion is defined as follows.
\begin{defn}
    We define the minimum average cost as
    \begin{align*}
        C_{\rm a}^*(\epsilon) &\teq \inf\{\Gamma: \Gamma \mbox{ is $\epsilon$-strongly achievable}\\
        &\quad \mbox{in the sense of the average cost}\}.
    \end{align*}
\end{defn}
\begin{rem}
    We only consider the average cost in this section. This is because the minimum worst-case cost coincides with the minimum average cost after all for stationary memoryless sources. This is similar to Corollary \ref{cor: weak cost for stationary memoryless}.
\end{rem}

We also consider the non-asymptotic version of the achievablity defined as follows.
\begin{defn}
    \label{defn: strong achievability finite}
    For an integer $n \geq 1$, and real numbers $\Gamma, \epsilon \geq 0$, we say $\Gamma$ is $(n, \epsilon)$-strongly achievable in the sense of the average cost if and only if there exists a stochastic encoder $f_{n}$ such that
    \begin{align}
        I(Y^n;f_{n}(X^n)) & \leq \epsilon, \label{def: equ: strong criterion finite}\\
        {\rm E}[c_{n}(X^n,f_{n}(X^n))] &\leq \Gamma. \notag
    \end{align}    
\end{defn}
\begin{rem}
    Definition \ref{defn: strong achievability finite} adopts the strong independence criterion in \eqref{def: equ: strong criterion finite}. However, this is not important in the non-asymptotic setting because this criterion is regarded as the weak criterion if we set $\epsilon$ as $n\epsilon$.
\end{rem}

The non-asymptotic minimum average cost is defined as follows.
\begin{defn}
    We define the non-asymptotic minimum average cost for a given finite blocklength $n \geq 1$ as
    \begin{align*}
        C_{\rm a}^{*}(n,\epsilon) &\teq \inf\{\Gamma: \Gamma \mbox{ is  $(n, \epsilon)$-strongly achievable}\\
        &\quad \mbox{ in the sense of the average cost}\}.
    \end{align*}
\end{defn}
\begin{rem}
    When we employ a stochastic encoder, we can give a multi-letter characterization even for general cost functions and general sources as
    \begin{align*}
        C_{\rm a}^*(\epsilon) &= \inf_{\substack{\hat\bfX: \bfY - \bfX - \hat\bfX, \\ \limsup_{n \ra \infty} I(Y^n; \hat X^n) \leq \epsilon}} c(\bfX,\hat\bfX),\\
        C_{\rm a}^*(n, \epsilon) &= \inf_{\substack{\hat X^n: Y^n - X^n - \hat X^n, \\ I(Y^n; \hat X^n) \leq \epsilon}} \mathrm{E}[c_{n}(X^n,\hat X^n)].
    \end{align*}
    However, since this characterization is quite obvious from these definitions, we focus on the single-letter characterization of basic stationary memoryless sources and additive cost functions in this paper.
\end{rem}

\subsection{Minimum Average Costs}
In this section, we give a single-letter characterization of minimum average costs $C_{\rm a}^{*}(\epsilon)$ and $C_{\rm a}^{*}(n,\epsilon)$. Since this characterization is given by employing usual information-theoretical techniques, this might not be of the main interest. However, results obtained from it are interesting and insightful. %However, results and insights obtained from it are very interesting and important.

First of all, we show a single-letter characterization of the non-asymptotic minimum average cost $C_{\rm a}^{*}(n,\epsilon)$.
\begin{thm}
    \label{thm: strong finite length}
    For a pair of stationary memoryless sources $(X, Y)$, any integer $n\geq 1$, and any real number $\epsilon\geq 0$, we have
    \begin{align*}
        C_{\rm a}^{*}(n,\epsilon) = \min_{\substack{\hat X: Y - X - \hat X, \\ I(Y;\hat X)\leq \frac{\epsilon}{n}}} {\rm E}[c(X,\hat X)].
    \end{align*}
\end{thm}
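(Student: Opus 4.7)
I would prove the identity by matching achievability and converse bounds, exploiting the i.i.d.\ structure of $(X, Y)$ and the additivity of $c_n$. Since $I(Y;\hat X)$ and $\mathrm{E}[c(X,\hat X)]$ are continuous on the compact simplex of conditional distributions $P_{\hat X\mid X}$ on the finite alphabet $\hat\cX$, the minimum on the right-hand side is attained; denote a minimizer by $P_{\hat X^{*}\mid X}$ with value $\Gamma^{*}$.

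\textbf{Achievability.} I would define the stochastic encoder $f_n$ that generates $\hat X^n$ by applying $P_{\hat X^{*}\mid X}$ independently and memorylessly to each coordinate of $X^n$. Then $(X_i,Y_i,\hat X_i)$ are i.i.d.\ copies of $(X,Y,\hat X^{*})$, so additivity gives $\mathrm{E}[c_n(X^n,\hat X^n)]=\mathrm{E}[c(X,\hat X^{*})]=\Gamma^{*}$, and the memoryless product structure yields $I(Y^n;\hat X^n)=n\,I(Y;\hat X^{*})\le \epsilon$. Hence $C_{\rm a}^{*}(n,\epsilon)\le\Gamma^{*}$.

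\textbf{Converse.} Given an arbitrary stochastic encoder $f_n$ with $I(Y^n;\hat X^n)\le\epsilon$ and expected cost $\Gamma$, I would introduce a time-sharing index $T$ uniform on $\{1,\ldots,n\}$ and independent of $(X^n,Y^n,\hat X^n)$, and set $\tilde X\teq X_T$, $\tilde Y\teq Y_T$, $\tilde{\hat X}\teq \hat X_T$. Stationarity makes $(\tilde X,\tilde Y)$ equidistributed with $(X,Y)$. Since the encoder randomness is independent of $Y^n$ and the pairs $(X_j,Y_j)$ are i.i.d., a short computation gives $Y_i-X_i-\hat X_i$ for every $i$, and averaging over $T$ promotes this to $\tilde Y-\tilde X-\tilde{\hat X}$. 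Additivity gives $\mathrm{E}[c(\tilde X,\tilde{\hat X})]=\mathrm{E}[c_n(X^n,\hat X^n)]=\Gamma$, so once $I(\tilde Y;\tilde{\hat X})\le\epsilon/n$ is established, $\tilde{\hat X}$ is feasible for the right-hand-side minimum and forces $\Gamma\ge\Gamma^{*}$.

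\textbf{The main obstacle.} The only non-trivial estimate is the mutual-information bound on $I(\tilde Y;\tilde{\hat X})$. Using $I(\tilde Y;T)=0$ (each $Y_i$ has the same marginal $P_Y$), the chain rule yields $I(\tilde Y;\tilde{\hat X})\le I(\tilde Y;\tilde{\hat X}\mid T)=\frac{1}{n}\sum_{i=1}^n I(Y_i;\hat X_i)$. Separately, because $Y^n$ is i.i.d.\ one has $H(Y^n)=\sum_i H(Y_i)$, and conditioning reduces entropy, so $I(Y^n;\hat X^n)\ge \sum_i I(Y_i;\hat X_i)$. Combining the two bounds gives $I(\tilde Y;\tilde{\hat X})\le \tfrac{1}{n}I(Y^n;\hat X^n)\le \epsilon/n$, which completes the converse and hence the theorem.
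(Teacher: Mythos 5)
Your proposal is correct and follows essentially the same route as the paper: a per-letter memoryless application of the optimal $P_{\hat X\mid X}$ for achievability, and for the converse the single-letterization $I(Y^n;\hat X^n)\ge\sum_{i}I(Y_i;\hat X_i)$ (valid because $Y^n$ is i.i.d.) followed by a time-sharing variable that is independent of $Y$. Your entropy-decomposition derivation of that key inequality is just a rephrasing of the paper's chain-rule step using $I(Y_i;Y^{i-1})=0$, so there is nothing substantively different to flag.
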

\begin{proof}
    First, we show the converse part. If $\Gamma$ is $(n, \epsilon)$-strongly achievable in the sense of the average cost, there exists $f_{n}$ such that
    \begin{align}
        I(Y^n; \hat X^n) &\leq \epsilon,\notag\\
        \mathrm{E}[c_{n}(X^n, \hat X^n)] &\leq \Gamma,
        \label{equ: proof: thm: strong finite length, E leq G}
    \end{align}
    where $\hat X^n = f_{n}(X^n)$. We note that
    \begin{align}
        I(Y^n;\hat X^n)
        &= \sum_{i=1}^{n}I(Y_i; \hat X^n| Y^{i-1}) \notag\\
        &= \sum_{i=1}^{n}I(Y_i; \hat X^n , Y^{i-1}) \notag\\
        &\geq \sum_{i=1}^{n}I(Y_i; \hat X_i),
        \label{equ: chain rule of mutual information}
    \end{align}
    where the second equality comes from the fact that $Y_{i}$ is independent of $Y^{i - 1}$, i.e., $I(Y_{i}; Y^{i - 1}) = 0$. On the other hand, let $Q$ be an RV on $\{1,2,\cdots,n\}$ and $(Q,Y,X,\hat X)$ be RVs on $\{1,\cdots,n\}\times\cY\times\cX\times\hat \cX$ such that $P_{QYX\hat X}(i,y,x,\hat x) = \frac{1}{n} P_{Y_i X_i \hat X_i}(y,x,\hat x)$. Then, we have
    \begin{align}
        \epsilon \geq \sum_{i=1}^{n}I(Y_i; \hat X_i) = n I(Y; \hat X|Q) \geq n I(Y;\hat X), \label{equ: strong ineq of mutual info}
    \end{align}
    where the first inequality comes from \eqref{equ: chain rule of mutual information} and the last inequality comes from the fact that $Q$ is independent of $Y$. Thus, from \eqref{equ: proof: thm: strong finite length, E leq G}, we have
    \begin{align}
        \Gamma \geq \frac{1}{n} \sum_{i=1}^{n} \mathrm{E}[c(X_i, \hat X_i)] \geq \min_{\substack{\hat X: Y-X-\hat X,\\ I(Y;\hat X)\leq \frac{\epsilon}{n}}} {\rm E}[c(X,\hat X)],
        \label{equ: strong ineq of add cost}
    \end{align}
    where the last inequality comes from \eqref{equ: strong ineq of mutual info} and the fact that $Y - X - \hat X$. Since this inequality holds for any $(n, \epsilon)$-strongly achievable $\Gamma$, we have
    \begin{align*}
        C_{\rm a}^{*}(n,\epsilon) &\geq \min_{\substack{\hat X: Y-X-\hat X,\\ I(Y;\hat X)\leq \frac{\epsilon}{n}}} {\rm E}[c(X,\hat X)].
    \end{align*}   

    Next, we show the direct part. Let $\hat X$ be an RV on $\hat \cX$ such that $Y - X - \hat X$ and 
    \begin{align*}
        I(Y;\hat X)&\leq \frac{\epsilon}{n}.
    \end{align*}
    Then, the direct part is obvious, if we define the encoder as
    \begin{align*}
        f_n(x^n) = \hat x^n \mbox{ with probability } \prod_{i=1}^{n}P_{\hat X|X}(\hat x_i|x_i).
    \end{align*}
    For this encoder, we have
    \begin{align*}
        I(Y^n; f_{n}(X^n)) &= n I(Y;\hat X) \leq \epsilon,\\
        % \end{align*}
        % and 
        % \begin{align*}
        {\rm E}[c_{n}(X^n,f_{n}(X^n))] 
        % & = {\rm E}\left[\frac{1}{n}\sum_{i=1}^{n}c(X_i,\hat X_i)\right]\\
        & = {\rm E}[c(X,\hat X)].
    \end{align*}
    Thus, ${\rm E}[c(X,\hat X)]$ is $(n, \epsilon)$-strongly achievable for any $\hat X$ such that $Y - X - \hat X$ and $I(Y;\hat X) \leq \frac{\epsilon}{n}$. This implies that 
    \begin{align*}
        C_{\rm a}^{*}(n,\epsilon) &\leq \min_{\substack{\hat X: Y-X-\hat X,\\ I(Y;\hat X)\leq \frac{\epsilon}{n}}} {\rm E}[c(X,\hat X)].
        \qedhere
    \end{align*}
\end{proof}
\begin{rem}
    In the converse part, the single-letter characterization in the most right-hand sides of \eqref{equ: strong ineq of mutual info} and \eqref{equ: strong ineq of add cost} are largely dependent on the assumption that sources are stationary memoryless and the cost function is additive.
\end{rem}
\begin{rem}
    Since we do not use the finiteness of $\cX$, $\cY$, and $\hat \cX$,
    Theorem \ref{thm: strong finite length}  holds even if these sets are countably infinite.
\end{rem}

Next, we give a single-letter characterization of the minimum average cost $C_{\rm a}^{*}(\epsilon)$ which shows that it is impossible to reduce the minimum cost by allowing information leakage.
\begin{thm}
    \label{thm: strong}
    For a pair of stationary memoryless sources $(X, Y)$ and any $\epsilon \geq 0$, we have
    \begin{align*}
        C_{\rm a}^*(\epsilon) = \min_{\substack{\hat X: Y - X - \hat X, \\ I(Y;\hat X) = 0}}{\rm E}[c(X,\hat X)].
    \end{align*}
\end{thm}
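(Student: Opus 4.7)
The plan is to reduce to Theorem~\ref{thm: strong finite length} and pass to the limit $n \to \infty$. The intuition is that the unnormalized leakage budget $\epsilon$ in the strong criterion spreads to a per-block budget $\epsilon/n \to 0$ inside the finite-blocklength formula, so asymptotically the constraint collapses to $I(Y;\hat X) = 0$ and no nontrivial leakage can buy any reduction in cost.

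For the achievability (the $\leq$ direction), I would take $\hat X^*$ attaining the single-letter minimum on the right-hand side (so $Y - X - \hat X^*$ and $I(Y;\hat X^*) = 0$), and define the stochastic encoder $f_n$ as the memoryless application of $P_{\hat X^*|X}$ to each coordinate. Since $(X_i, Y_i)$ is i.i.d., the joint distribution of $(Y^n, X^n, f_n(X^n))$ is a product of i.i.d.\ copies of $(Y, X, \hat X^*)$, so $I(Y^n; f_n(X^n)) = n I(Y; \hat X^*) = 0 \leq \epsilon$ and ${\rm E}[c_n(X^n, f_n(X^n))] = {\rm E}[c(X,\hat X^*)]$. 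Hence ${\rm E}[c(X,\hat X^*)]$ is $\epsilon$-strongly achievable.

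For the converse, fix any $\epsilon$-strongly achievable $\Gamma$ and any $\delta > 0$. The $\limsup$ constraint in Definition~\ref{defn: strong achievability} forces $I(Y^n; f_n(X^n)) \leq \epsilon + \delta$ for all sufficiently large $n$. Applying Theorem~\ref{thm: strong finite length} with $\epsilon + \delta$ in place of $\epsilon$ yields
\[
{\rm E}[c_n(X^n, f_n(X^n))] \geq \psi_n \teq \min_{\hat X:\, Y-X-\hat X,\; I(Y;\hat X) \leq (\epsilon+\delta)/n} {\rm E}[c(X,\hat X)],
\]
and taking $\limsup_n$ on both sides gives $\Gamma \geq \limsup_n \psi_n$.

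The main technical step, and the only real obstacle, is to show $\lim_n \psi_n = \psi_\infty \teq \min_{Y-X-\hat X,\; I(Y;\hat X) = 0} {\rm E}[c(X,\hat X)]$, that is, right-continuity at $0$ of the optimal cost as a function of the leakage threshold. This should follow by a standard compactness argument: since $\hat\cX$ is finite, the set of candidate conditional distributions $P_{\hat X|X}$ is compact, so a minimizer $\hat X_n$ for $\psi_n$ exists, and some subsequence converges to a limit $\hat X^\infty$. By continuity of mutual information and of the expected cost in the joint distribution on finite alphabets, $I(Y;\hat X^\infty) \leq \lim_n (\epsilon+\delta)/n = 0$, so $\hat X^\infty$ is feasible for $\psi_\infty$, and ${\rm E}[c(X,\hat X^\infty)] = \lim_n \psi_n \geq \psi_\infty$. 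The matching bound $\psi_n \leq \psi_\infty$ is trivial since every zero-leakage $\hat X$ is feasible for the $\psi_n$ problem, so the limit equals $\psi_\infty$ and is in particular independent of $\delta$. Combining, $\Gamma \geq \psi_\infty$ for every $\epsilon$-strongly achievable $\Gamma$, which gives the $\geq$ direction and completes the proof.
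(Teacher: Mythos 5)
Your proposal is correct and follows essentially the same route as the paper: achievability via the memoryless stochastic encoder built from a zero-leakage minimizer, and the converse by invoking the finite-blocklength converse of Theorem~\ref{thm: strong finite length} with threshold $(\epsilon+\delta)/n \downarrow 0$ and then establishing right-continuity of the constrained minimum at leakage $0$. The compactness-plus-continuity argument you identify as the ``only real obstacle'' is precisely what the paper isolates and proves in \ref{sec: appendix: continuity} (convergent subsequence of minimizing conditional distributions, continuity of mutual information and of the expected cost on finite alphabets), so no gap remains.
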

\begin{proof}
    If $\Gamma$ is $\epsilon$-strongly achievable in the sense of the average cost, there exists $f_{n}$ such that for any $\delta > 0$ and all sufficiently large $n > 0$,
    \begin{align*}
        I(Y^n; \hat X^n) &\leq \epsilon + \delta,\\
        \mathrm{E}[c_{n}(X^n, \hat X^n)] &\leq \Gamma + \delta,
    \end{align*}
    where $\hat X^n = f_{n}(X^n)$. By noting that $\delta > 0$ is arbitrary and $\min_{\hat X: Y-X-\hat X, I(Y;\hat X)\leq \epsilon} {\rm E}[c(X,\hat X)]$ is continuous at $\epsilon = 0$ (see \ref{sec: appendix: continuity}), the rest of the proof can be done in the same way as the proof of Theorem \ref{thm: strong finite length}. Hence, we omit the details.
\end{proof}
\begin{rem}
    The finiteness of sets $\cY$ and $\hat \cX$ is necessary to show the continuity at $\epsilon = 0$ in \ref{sec: appendix: continuity}. %this proof does not hold when these sets are countably infinite. It may need another proof to show the continuity at $\epsilon = 0$.
\end{rem}

According to Theorem \ref{thm: strong finite length} and Theorem \ref{thm: strong}, it holds that for any $n \geq 1$ and $\epsilon \geq 0$,
\begin{align*}
    C_{\rm a}^{*}(\epsilon) = C_{\rm a}^{*}(n, 0).
\end{align*}
Hence, we only consider $C_{\rm a}^{*}(n, \epsilon)$ because $C_{\rm a}^{*}(\epsilon)$ is a special case of it.

As in the previous section, the next corollary follows immediately.
\begin{cor}
    \label{cor: X = Y for strong finite}
    When $X=Y$, we have
    \begin{align}
        C_{\rm a}^{*}(n,\epsilon) = \min_{\hat X: I(X;\hat X)\leq \frac{\epsilon}{n}}{\rm E}[c(X,\hat X)].
        \label{equ: cor: X = Y for strong finite}
    \end{align}    
\end{cor}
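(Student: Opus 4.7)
The plan is to specialize Theorem \ref{thm: strong finite length} to the case $Y = X$. The theorem gives
\begin{align*}
    C_{\rm a}^{*}(n,\epsilon) = \min_{\substack{\hat X: Y - X - \hat X, \\ I(Y;\hat X)\leq \frac{\epsilon}{n}}} {\rm E}[c(X,\hat X)],
\end{align*}
so the claim reduces to checking that the feasible set for $\hat X$ on the right-hand side coincides with $\{\hat X: I(X;\hat X) \leq \epsilon/n\}$ once we set $Y = X$.

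First I would observe that the Markov constraint $Y - X - \hat X$ becomes $X - X - \hat X$ when $Y = X$, and this is satisfied by every joint distribution of $(X, \hat X)$: conditional on $X$, the variable $X$ is deterministic, so it is trivially conditionally independent of $\hat X$. Hence the Markov condition in Theorem \ref{thm: strong finite length} imposes no restriction in this special case. Next, I would note that the mutual information constraint $I(Y; \hat X) \leq \epsilon/n$ becomes $I(X; \hat X) \leq \epsilon/n$ by direct substitution. Combining these two observations, the feasible set in Theorem \ref{thm: strong finite length} becomes exactly $\{\hat X: I(X; \hat X) \leq \epsilon/n\}$, and the objective ${\rm E}[c(X, \hat X)]$ is unchanged. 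This yields \eqref{equ: cor: X = Y for strong finite}.

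There is essentially no obstacle here; the corollary is a one-line specialization of Theorem \ref{thm: strong finite length}, and the only point that warrants mention is the vacuity of the Markov condition when the two sources coincide.
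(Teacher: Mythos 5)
Your proof is correct and matches the paper's reasoning: the paper states the corollary "follows immediately" from Theorem \ref{thm: strong finite length}, and the justification is exactly what you give — the Markov chain $X - X - \hat X$ is vacuous and the constraint $I(Y;\hat X)\leq \epsilon/n$ becomes $I(X;\hat X)\leq \epsilon/n$.
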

According to this corollary and Corollary \ref{cor: weak cost for stationary memoryless}, when $\bfX = \bfY$ and $\bfX$ is a stationary memoryless source, it holds that for any $\epsilon \geq 0$,
\begin{align*}
    C_{\rm a}(\epsilon, R) = C_{\rm w}(\epsilon, R) = C_{\rm a}^{*}(1, \epsilon).
    % \label{equ: ca = cw = ca for memoryless}
\end{align*}

Since the right-hand side of \eqref{equ: cor: X = Y for strong finite} is the distortion-rate function, we have some closed-form expressions of the minimum cost (see.\ e.g., \cite{hanspringerinformation} and \cite{cover2006eit}).
For example, let $\cX=\hat \cX=\{0,1\}$, $P_{X}(0) = p$, and $c(x,\hat x)=\mathbf{1}\{x\neq \hat x\}$, where $p \in [0, 1/2]$ and $\mathbf{1}\{\cdot\}$ denotes the indicator function. Then, we have
\begin{align}
    C_{\rm a}^*(n, \epsilon) = h^{-1}(|h(p) - \epsilon / n|^{+}),
    \label{equ: example for finite strong}
\end{align}
where $|x|^{+} = \max\{0, x\}$, $h(p) = - p \log p - (1 - p) \log (1 - p)$, and $h^{-1}: [0, \log 2] \ra [0, 1/2]$ is the inverse function of $h$.
% On the other hand, let $\cX=\hat \cX=\mathbb{R}$, $X\sim \cN(0,\sigma^2)$, and $c(x,\hat x) =(x-\hat x)^2$.
% Then, we have
% \begin{align*}
%     C_{\rm a}^*(n,0) & = \sigma^2.
% \end{align*}

Furthermore, according to Corollary \ref{cor: X = Y for strong finite}, when $X = Y$, it holds that
\begin{align*}
    C_{\rm a}^*(n,0)
    % &= \min_{\hat X: I(X;\hat X)=0}{\rm E}[c(X,\hat X)]\\
    = \min_{\hat x\in\hat \cX}\mathrm{E}[c(X,\hat x)]
    \teq \Gamma_{\rm min}, \quad \forall n\geq 1,
\end{align*}
where the first equality comes from the fact that $X$ and $\hat X$ are independent. Interestingly, this can be achieved by a certain \textit{deterministic} encoder as follows: Let $\tilde x = \argmintext_{\hat x\in \hat \cX}\mathrm{E}[c(X,\hat x)]$ and define an encoder $f_{n}^{(\rm r)}$ as
\begin{align*}
    f_{n}^{(\rm r)}(x^n) \teq (\tilde x,\cdots,\tilde x), \quad \forall x^n\in\cX^n.
\end{align*}
Then, this encoder achieves $C_{\rm a}^*(n,0)\ (= \Gamma_{\mathrm{min}})$, i.e., we have
\begin{align}
    I(Y^n;f_{n}^{(\rm r)}(X^n)) &= 0, \label{equ: fr achieves independence}\\
    \mathrm{E}[c_{n}(X^n,f_{n}^{(\rm r)}(X^n))]
    &= \frac{1}{n} \sum_{i = 1}^{n} \mathrm{E}[c(X_{i}, \tilde x)]\notag\\
    &= \mathrm{E}[c(X, \tilde x)] = \Gamma_{\mathrm{min}}. \label{equ: fr achieves minimum cost}
\end{align}
This means that when $X = Y$, the optimal method of erasure is to overwrite the source sequence with repeated identical symbols using $f_{n}^{(\rm r)}$. We note that $f_{n}^{(\rm r)}$ gives the minimum average cost among encoders using repeated identical symbols.

Next, we give a sufficient condition such that $C_{\rm a}^*(n,0)$ can be achieved by the encoder $f_{n}^{(\rm r)}$. Then, we show that the case where $X = Y$ is a special case of the sufficient condition. To this end, we define the \textit{weak independence} introduced by Berger and Yeung \cite{32120}.
\begin{defn}[Weak independence]
    For a pair $(X, Y)$ of RVs, let $P_{Y|X}(\cdot | x) = (P_{Y|X}(y|x): y \in \cY)$ be the $x$th row of the stochastic matrix $P_{Y|X}$. Then, we say $Y$ is \textit{weakly independent} of $X$ if the rows $P_{Y|X}(\cdot | x)$ $(x \in \cX)$ are linearly dependent.
\end{defn}
\begin{rem}
    \label{rem: weak independence}
    If $X$ is binary, then $Y$ is weakly independent of $X$ if and only if $Y$ and $X$ are independent \cite[Remark 3]{32120}.
\end{rem}

The weak independence has a useful property for independence of a triple of RVs satisfying a Markov chain. This property is shown in the next lemma.
\begin{lem}[{\cite[Theorem 4]{32120}}]
    \label{lem: weak independence and Markov chain}
    Let $\cX$, $\cY$, and $\hat \cX$ be finite sets, and $|\hat \cX| \geq 2$. Then, for a pair $(X, Y)$ of RVs, there exists an RV $\hat X$ satisfying
    \begin{enumerate}
        \item $Y - X - \hat X$
        \item $Y$ and $\hat X$ are independent
        \item $X$ and $\hat X$ are not independent
    \end{enumerate}
    if and only if $Y$ is weakly independent of $X$.
\end{lem}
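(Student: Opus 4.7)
The plan is to set up both directions around the joint distribution $P_{XY\hat X}(x,y,\hat x) = P_X(x) P_{Y|X}(y|x) P_{\hat X|X}(\hat x|x)$ forced by the Markov chain, and to translate the independence $Y \perp \hat X$ into a linear equation on the rows $P_{Y|X}(\cdot|x)$.

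For the only-if direction, I would fix any $\hat x \in \hat\cX$ and define the coefficients
\begin{align*}
\lambda_x(\hat x) \teq P_X(x)\left[P_{\hat X|X}(\hat x|x) - P_{\hat X}(\hat x)\right].
\end{align*}
Expanding $P_{Y\hat X}(y,\hat x)$ via the Markov chain and subtracting $P_Y(y) P_{\hat X}(\hat x)$ yields $\sum_x \lambda_x(\hat x) P_{Y|X}(y|x) = 0$ for every $y$, while $\sum_x \lambda_x(\hat x) = 0$ automatically. Because $X$ and $\hat X$ are not independent, there must exist some $\hat x$ for which the vector $(\lambda_x(\hat x))_{x \in \cX}$ is nonzero, which exhibits a nontrivial linear combination of the rows $P_{Y|X}(\cdot | x)$ equal to zero, i.e., weak independence.

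For the if direction, I would start from coefficients $\alpha_x$, not all zero, with $\sum_x \alpha_x P_{Y|X}(y|x) = 0$ for all $y$; summing over $y$ forces $\sum_x \alpha_x = 0$. Picking two distinct symbols $\hat x_1, \hat x_2 \in \hat \cX$ (possible since $|\hat\cX|\geq 2$) and assuming without loss of generality that $P_X(x) > 0$ for every $x$ in the support, I would define a binary kernel
\begin{align*}
P_{\hat X|X}(\hat x_1|x) &= \tfrac{1}{2} + \delta \alpha_x / P_X(x), \\
P_{\hat X|X}(\hat x_2|x) &= \tfrac{1}{2} - \delta \alpha_x / P_X(x),
\end{align*}
with $\delta > 0$ chosen small enough to keep both values in $[0,1]$, and all other $P_{\hat X|X}(\hat x|x)$ set to zero. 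A direct calculation using $\sum_x \alpha_x = 0$ gives $P_{\hat X}(\hat x_1) = P_{\hat X}(\hat x_2) = 1/2$, and using $\sum_x \alpha_x P_{Y|X}(y|x) = 0$ gives $P_{Y\hat X}(y,\hat x_i) = P_Y(y)/2 = P_Y(y) P_{\hat X}(\hat x_i)$, so $Y \perp \hat X$. The Markov chain $Y - X - \hat X$ holds by construction, and $X \not\perp \hat X$ because independence would force $P_{\hat X|X}(\hat x_1|x) = 1/2$ for every $x$, hence $\alpha_x = 0$ for every $x$, contradicting the choice of $\alpha$.

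The only real obstacle is the scaling step in the if direction: one must verify that after possibly rescaling $\alpha$ (equivalently, taking $\delta$ small), the candidate kernel is a genuine stochastic matrix while the $\alpha_x$-perturbation is still visible, so that $\hat X$ is not independent of $X$. Edge cases such as $P_X(x) = 0$ for some $x$ are absorbed by restricting the linear dependence to the support of $X$ before applying the construction, since the joint law $P_{XY\hat X}$ is unaffected by how $P_{\hat X|X}(\cdot|x)$ is defined outside that support.
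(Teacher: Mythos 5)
The paper never proves this lemma: it is imported verbatim as Theorem 4 of Berger and Yeung \cite{32120}, so there is no internal proof to compare against. Your argument is a correct, self-contained reconstruction of that result, and both directions check out. In the only-if direction, the Markov factorization gives $\sum_{x}\lambda_{x}(\hat x)P_{Y|X}(y|x)=P_{Y\hat X}(y,\hat x)-P_{Y}(y)P_{\hat X}(\hat x)=0$ for every $y,\hat x$, and $\lambda_{x_{0}}(\hat x_{0})=P_{X\hat X}(x_{0},\hat x_{0})-P_{X}(x_{0})P_{\hat X}(\hat x_{0})\neq 0$ for some pair precisely because $X$ and $\hat X$ are not independent, which supplies the required nontrivial dependence among the rows. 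In the if direction, the two-symbol perturbation of the uniform kernel is exactly the right device: $\sum_{x}\alpha_{x}=0$ yields $P_{\hat X}(\hat x_{1})=P_{\hat X}(\hat x_{2})=1/2$, the row dependence kills the cross term in $P_{Y\hat X}$, a sufficiently small $\delta>0$ (finitely many constraints, $P_X(x)>0$ on the support) keeps the kernel stochastic, and non-independence of $X$ and $\hat X$ follows because independence would force $\alpha_{x}=0$ for all $x$. The one point you should state rather than wave at is the support issue: the weak-independence hypothesis gives a linear dependence among \emph{all} rows of $P_{Y|X}$, whereas your construction needs one supported on $\{x:P_{X}(x)>0\}$ --- coefficients sitting on null rows contribute nothing to $P_{Y\hat X}$ and could in principle carry the entire dependence, in which case no valid $\hat X$ exists. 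This is really an ambiguity in the definition of weak independence when $X$ does not have full support (Berger and Yeung work with the rows on the support, and $P_{Y|X}(\cdot|x)$ is not even canonically defined off it), so it suffices to declare at the outset that the rows and the dependence relation are taken over $\mathrm{supp}(P_{X})$; with that convention both directions are airtight.
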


Now, we give a sufficient condition.
\begin{thm}
    If $Y$ is \textit{not} weakly independent of $X$, the optimal method of erasure is to overwrite the source sequence with repeated identical symbols using $f_{n}^{(\rm r)}$, i.e., it holds that
    \begin{align*}
        I(Y^n;f_{n}^{(\rm r)}(X^n)) &= 0,\\
        {\rm E}[c_{n}(X^n,f_{n}^{(\rm r)}(X^n))] &= C_{\rm a}^*(n,0).
    \end{align*}
\end{thm}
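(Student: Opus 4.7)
The plan is to combine Theorem~\ref{thm: strong finite length} (specialized to $\epsilon = 0$) with Lemma~\ref{lem: weak independence and Markov chain} in its contrapositive form. Theorem~\ref{thm: strong finite length} already gives
\begin{align*}
    C_{\rm a}^*(n,0) = \min_{\substack{\hat X:\, Y - X - \hat X,\\ I(Y;\hat X) = 0}} {\rm E}[c(X,\hat X)],
\end{align*}
so it suffices to (i) show that every admissible $\hat X$ satisfies ${\rm E}[c(X,\hat X)] \geq \Gamma_{\min}$, and then (ii) exhibit $f_{n}^{(\rm r)}$ as an encoder that meets this lower bound while being admissible.

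For step (i) I would reason as follows. Assume first the nontrivial case $|\hat \cX| \geq 2$. Lemma~\ref{lem: weak independence and Markov chain} states that a triple $(X,Y,\hat X)$ with $Y - X - \hat X$, $Y$ independent of $\hat X$, and $X$ \emph{not} independent of $\hat X$ exists if and only if $Y$ is weakly independent of $X$. Taking the contrapositive of the ``only if'' direction, if $Y$ is not weakly independent of $X$, then every $\hat X$ with $Y - X - \hat X$ and $I(Y;\hat X) = 0$ must additionally have $X$ and $\hat X$ independent. Independence then yields
\begin{align*}
    {\rm E}[c(X,\hat X)] = \sum_{\hat x \in \hat \cX} P_{\hat X}(\hat x)\, {\rm E}[c(X,\hat x)] \geq \min_{\hat x \in \hat \cX} {\rm E}[c(X,\hat x)] = \Gamma_{\min}.
\end{align*}
The case $|\hat \cX| = 1$ is trivial since only one encoder exists and it equals $f_{n}^{(\rm r)}$.

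For step (ii), the encoder $f_{n}^{(\rm r)}$ produces the deterministic output $(\tilde x,\dots,\tilde x)$, so equations \eqref{equ: fr achieves independence} and \eqref{equ: fr achieves minimum cost} (already derived just before the theorem) immediately give $I(Y^n; f_{n}^{(\rm r)}(X^n)) = 0$ and ${\rm E}[c_{n}(X^n, f_{n}^{(\rm r)}(X^n))] = \Gamma_{\min}$. Combining steps (i) and (ii) yields $C_{\rm a}^*(n,0) = \Gamma_{\min}$ and shows that $f_{n}^{(\rm r)}$ attains it, which is the claim.

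The only substantive step is the correct contrapositive use of Lemma~\ref{lem: weak independence and Markov chain}; everything else is a one-line convexity/averaging bound for ${\rm E}[c(X,\hat X)]$ when $X$ and $\hat X$ are independent, plus the already-established identities for the constant encoder. No new estimates or constructions are needed.
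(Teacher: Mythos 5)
Your proof is correct and follows essentially the same route as the paper's: the contrapositive of Lemma~\ref{lem: weak independence and Markov chain} to force $I(X;\hat X)=0$ for every admissible $\hat X$, the averaging bound ${\rm E}[c(X,\hat X)]=\sum_{\hat x}P_{\hat X}(\hat x){\rm E}[c(X,\hat x)]\geq \Gamma_{\min}$, and the already-established identities for $f_{n}^{(\rm r)}$ for the matching upper bound. Your explicit treatment of the $|\hat\cX|=1$ case is a small tidy addition the paper omits, but otherwise the arguments coincide.
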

\begin{proof}
    Since we immediately obtain that $I(Y^n;f_{n}^{(\rm r)}(X^n)) = 0$ and ${\rm E}[c_{n}(X^n,f_{n}^{(\rm r)}(X^n))] = \Gamma_{\mathrm{min}}$ (see \eqref{equ: fr achieves independence} and \eqref{equ: fr achieves minimum cost}), we only have to show that $C_{\rm a}^{*}(n, 0) = \Gamma_{\mathrm{min}}$.

    Since $Y$ is not weakly independent of $X$, there does not exist an RV $\hat X$ simultaneously satisfying three conditions in Lemma \ref{lem: weak independence and Markov chain}. This implies that for any $\hat X$ such that $Y - X - \hat X$ and $I(Y;\hat X) = 0$, it must satisfy that $I(X; \hat X) = 0$. This is because if $I(X; \hat X) > 0$, $\hat X$ simultaneously satisfies three conditions in Lemma \ref{lem: weak independence and Markov chain}.
    
    Thus, we have
    \begin{align*}        
        C_{\rm a}^{*}(n, 0) &= \min_{\substack{\hat X: Y - X - \hat X, \\ I(Y;\hat X) = 0}} {\rm E}[c(X,\hat X)]\\
        &\eqo{(a)} \min_{\substack{\hat X: Y - X - \hat X, \\ I(Y;\hat X) = 0, I(X; \hat X) = 0}} {\rm E}[c(X,\hat X)]\\
        % &= \min_{\substack{\hat X: Y - X - \hat X, \\ I(Y;\hat X) = 0, I(X; \hat X) = 0}} \sum_{\hat x \in \hat \cX} \sum_{x \in \cX} P_{\hat X}(x) P_{X}(x) c(x, \hat x)\\
        % &= \min_{\substack{\hat X: Y - X - \hat X, \\ I(Y;\hat X) = 0, I(X; \hat X) = 0}} \sum_{\hat x \in \hat \cX} P_{\hat X}(x) \sum_{x \in \cX} P_{X}(x) c(x, \hat x)\\
        &\eqo{(b)} \min_{\substack{\hat X: Y - X - \hat X, \\ I(Y;\hat X) = 0, I(X; \hat X) = 0}} \sum_{\hat x \in \hat \cX} P_{\hat X}(\hat x) \mathrm{E}[c(X, \hat x)]\\
        % &\geq \min_{\substack{\hat X: Y - X - \hat X, \\ I(Y;\hat X) = 0, I(X; \hat X) = 0}} \sum_{\hat x \in \hat \cX} P_{\hat X}(x) \min_{\hat x \in \hat \cX} \mathrm{E}[c(X, \hat x)]\\
        % &= \min_{\hat x \in \hat \cX} \mathrm{E}[c(X, \hat x)]\\
        &\geq \Gamma_{\mathrm{min}},
    \end{align*}
    where (a) comes from the above argument and (b) follows since $X$ and $\hat X$ are independent.

    Since the opposite direction is obvious by setting $\hat X = \tilde x$ with probability $1$, this completes the proof.
\end{proof}
If $X = Y$, $Y$ is not weakly independent of $X$. Thus, this is a special case of this sufficient condition. According to Remark \ref{rem: weak independence}, we can also show that if $X$ is binary, the encoder $f_{n}^{(\rm r)}$ is optimal as long as $Y$ and $X$ are not independent.

On the other hand, if $Y$ is weakly independent of $X$, $C_{\rm a}^*(n,0)$ cannot be achieved by the repeated symbols using the encoder $f_{n}^{(\rm r)}$ in general. To show this fact, we give an example such that $C_{\rm a}^*(n,0) < \Gamma_{\mathrm{min}}$. Let $\cY=\{0,1\}$, $\cX = \hat \cX = \{0,1,2\}$, $c(x,\hat x) = \mathbf{1}\{x \neq \hat x\}$, $P_{X}(x) = 1/3$ for all $x \in \{0, 1, 2\}$, and
\begin{align*}
    P_{Y|X} =  
    \left[
        \begin{matrix}
            1 & 0 \\
            0 & 1\\
            0 & 1
        \end{matrix}
    \right],
\end{align*}
where the $x$th row and the $y$th column denotes the conditional probability $P_{Y|X}(y|x)$. Then, we have $\Gamma_{\rm min} = 2/3$. We note that $Y$ is weakly independent of $X$.
% \begin{align*}
%     \Gamma_{\rm min} &= \min_{\hat x\in\{0,1,2\}}\mathrm{E}[c(X,\hat x)]\\
%  %     
%     &= \min\{\mathrm{E}[c(X,0)],\mathrm{E}[c(X,1)],\mathrm{E}[c(X,2)]\}\\
%  %     
%     &= 2/3.
% \end{align*}
On the other hand, we consider an RV $\hat X$ such that $Y - X - \hat X$, and 
\begin{align*}
    P_{\hat X|X} =  
    \left[
        \begin{matrix}
            1/3 &  1/3 &  1/3 \\
            1/6 &  2/3 &  1/6 \\
            1/2 &  0 &  1/2
        \end{matrix}
    \right],
\end{align*}
where the $x$th row and the $\hat x$th column denotes the conditional probability $P_{\hat X|X}(\hat x|x)$. Then, one can easily check that $Y$ is independent of $\hat X$, and
\begin{align}
    C_{\rm a}^*(n,0) 
    \leq {\rm E}[c(X,\hat X)] 
    % &= \sum_{(x,\hat x)\in\{0,1,2\}\times \{0,1,2\}} \frac{1}{3} P_{\hat X|X}(\hat x|x) 1\{x\neq \hat x\}\\
    % % 
    % &= \frac{1}{3} \left(P_{\hat X|X}(1|0) + P_{\hat X|X}(2|0) + P_{\hat X|X}(0|1) + P_{\hat X|X}(2|1) + P_{\hat X|X}(0|2) + P_{\hat X|X}(1|2) \right)\\
    % % 
    % &= \frac{1}{3} \left(1/3 + 1/3 + 1/6 + 1/6 + 1/2 \right)\\
    % % 
    &= 1/2 < \Gamma_{\rm min}.
    \label{equ: fr is no longer optimal when Y is WI of X}
\end{align}
Hence, the encoder $f_{n}^{(\rm r)}$ is no longer optimal. % when confidential information is behind the information to be erased (i.e., $X \neq Y$). 

Further, if we allow a little bit of leakage of information, i.e., $\epsilon > 0$, the encoder $f_{n}^{(\rm r)}$ is no longer optimal even if $Y$ is not weakly independent of $X$. This is because in general, it holds that $C_{\rm a}^*(n,\epsilon) < \Gamma_{\rm min}$ for $\epsilon > 0$ (see \eqref{equ: example for finite strong} and also \eqref{equ: fr is no longer optimal when Y is WI of X}). % According to \eqref{equ: ca = cw = ca for memoryless}, this fact also holds in the weak independent criterion.

The optimality of the encoder $f_{n}^{(\rm r)}$ is summarized in Table \ref{table: optimality of fnr}.

\begin{table}[t]
    \caption{This table shows that $f_{n}^{(\rm r)}$ is optimal or not in the sense that it  whenever can achieve the minimum average cost $C_{\rm a}^*(n, \epsilon)$ or not for each corresponding condition. WI is an abbreviation for ``weakly independent''.}
    \centering
    \begin{tabular}[t]{|c|c|c|}
      \hline
      & $Y$ is not WI of $X$ & $Y$ is WI of $X$\\
      \hline
      $\epsilon = 0$ & optimal & not optimal\\
      \hline
      $\epsilon > 0$ & not optimal & not optimal\\
      \hline
    \end{tabular}
    \label{table: optimality of fnr}
\end{table}

\section{Proofs of Theorems}
\label{sec: proofs of theorems}
In this section, we prove Theorems \ref{thm: weak average} and \ref{thm: weak worst-case}.

\subsection{Fundamental Lemmas for the Random Number Generation}
In this section, we introduce some lemmas to prove Theorems \ref{thm: weak average} and \ref{thm: weak worst-case}. Since proofs of these lemmas are similar to the proofs in \cite[Section 2]{hanspringerinformation}, we will omit the details.

For two probability distributions $P$ and $Q$ on the same set $\cX$, 
we define the variational distance $d(P, Q)$ as
\begin{align*}
    d(P, Q) \teq \sum_{x \in \cX} |P(x) - Q(x)|.
\end{align*}

For all lemmas in this section, let $(\bfX, \bfY, \bfZ)=\{(X^n,Y^n,Z^n)\}_{n=1}^{\infty}$ be a triple of sequences of RVs, where $(X^n,Y^n,Z^n)$ is a triple of RVs on $\cX^n\times\cY^n\times\cZ^n$. For this triple, we define
\begin{align*}
    \cS_n(\alpha) &\teq \left\{(x^n,z^n)\in\cX^n\times\cZ^n:\right.\\
    &\left. \quad \frac{1}{n}\log\frac{1}{P_{X^n|Z^n}(x^n|z^n)}\geq \alpha\right\},\\
    \cT_n(\beta) &\teq \left\{(y^n,z^n)\in\cY^n\times\cZ^n: \right.\\
    &\left.\quad \frac{1}{n}\log\frac{1}{P_{Y^n|Z^n}(y^n|z^n)}\leq \beta \right\}.
\end{align*}

The next lemma is an extended version of \cite[Lemma 2.1.1]{hanspringerinformation}.
\begin{lem}
    \label{lem: variational distance of two RVs}
    % Let $(\bfX, \bfY, \bfZ)=\{(X^n,Y^n,Z^n)\}_{n=1}^{\infty}$ be a triple of general sources, where $(X^n,Y^n,Z^n)$ is a triple of RVs on $\cX^n\times\cY^n\times\cZ^n$.
    For any integer $n \geq 1$ and any real numbers $\gamma > 0$ and $a\in\mathbb{R}$, there exists a mapping $\varphi_{n}:\cX^n\times\cZ^n\ra\cY^n$ satisfying
    \begin{align*}
        d(P_{Y^nZ^n}, P_{\tilde Y^nZ^n})
        &\leq 2\Pr\{(X^n,Z^n)\notin \cS_n(a+\gamma)\}\\
        &\quad +2\Pr\{(Y^n,Z^n)\notin \cT_n(a)\} + 2 e^{-n\gamma},
    \end{align*}
    where $\tilde Y^n = \varphi_{n}(X^n,Z^n)$.
\end{lem}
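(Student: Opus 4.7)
The plan is to reduce the problem to a family of unconditional random-number-generation problems indexed by $z^n$, and then average over $P_{Z^n}$. Since $\tilde Y^n = \varphi_{n}(X^n, Z^n)$ preserves the marginal of $Z^n$, one has
\begin{align*}
d(P_{Y^n Z^n}, P_{\tilde Y^n Z^n}) = \sum_{z^n} P_{Z^n}(z^n)\, d\bigl(P_{Y^n | Z^n}(\cdot | z^n),\, P_{\tilde Y^n | Z^n}(\cdot | z^n)\bigr),
\end{align*}
so it suffices to construct, for each $z^n$, a map $\varphi_{n}(\cdot, z^n) : \cX^n \rightarrow \cY^n$ whose push-forward of $P_x \teq P_{X^n | Z^n}(\cdot | z^n)$ lies within variational distance $2 P_x(A_x) + 2 P_y(A_y) + 2 e^{-n\gamma}$ of $P_y \teq P_{Y^n | Z^n}(\cdot | z^n)$, where $A_x \teq \{x^n : (x^n, z^n) \notin \cS_n(a+\gamma)\}$ and $A_y \teq \{y^n : (y^n, z^n) \notin \cT_n(a)\}$.

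For each fixed $z^n$, I would use a greedy bin-filling procedure. Enumerate the complement sets $\cY^n \setminus A_y = \{y^n_{(1)}, \ldots, y^n_{(K)}\}$ and $\cX^n \setminus A_x$ in arbitrary orders. For $j = 1, 2, \ldots, K$ in turn, greedily add unconsumed elements of $\cX^n \setminus A_x$ to a bin $B_j$ so long as $P_x(B_j) \leq P_y(y^n_{(j)})$; the first time the next element would push $P_x(B_j)$ past $P_y(y^n_{(j)})$, freeze $B_j$ and advance to $j+1$. Set $\varphi_{n}(x^n, z^n) = y^n_{(j)}$ for $x^n \in B_j$, and $\varphi_{n}(x^n, z^n) = y^n_{(1)}$ for every $x^n$ never placed in any bin (this includes all of $A_x$ together with any leftover from $\cX^n \setminus A_x$). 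Two elementary facts drive the analysis: every element of $\cX^n \setminus A_x$ carries $P_x$-mass at most $e^{-n(a+\gamma)}$, so each filled bin satisfies $P_y(y^n_{(j)}) - P_x(B_j) \leq e^{-n(a+\gamma)}$; and every element of $\cY^n \setminus A_y$ carries $P_y$-mass at least $e^{-na}$, whence $K \leq e^{na}$.

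Set $U \teq \sum_{j=1}^{K}(P_y(y^n_{(j)}) - P_x(B_j))$ and $L \teq 1 - \sum_j P_x(B_j)$; the identity $L = P_y(A_y) + U$ is immediate from $\sum_j P_x(B_j) = P_y(\cY^n \setminus A_y) - U$. Splitting the variational distance over $y^n$ and using the triangle inequality on the $y^n_{(1)}$-entry to absorb the extra mass $L$ gives
\begin{align*}
d\bigl(P_y,\, P_{\tilde Y^n|Z^n}(\cdot|z^n)\bigr) \leq U + L + P_y(A_y) = 2U + 2 P_y(A_y).
\end{align*}
A short case split on whether the greedy sweep exhausts $\cX^n \setminus A_x$ before reaching $B_K$ then shows $U \leq P_x(A_x) + e^{-n\gamma}$: if the sweep finishes, the per-bin bound together with $K \leq e^{na}$ gives $U \leq Ke^{-n(a+\gamma)} \leq e^{-n\gamma}$; if it runs out of typical $x^n$'s first, then $U = P_y(\cY^n \setminus A_y) - P_x(\cX^n \setminus A_x) = P_x(A_x) - P_y(A_y) \leq P_x(A_x)$. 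Substituting and averaging over $P_{Z^n}$ completes the proof. The delicate point is the bookkeeping for the default bin $y^n_{(1)}$, which simultaneously absorbs both the atypical mass $P_x(A_x)$ and the leftover typical mass; the identity $L = P_y(A_y) + U$ is what explains the factor of $2$ appearing in each of the three terms of the stated bound.
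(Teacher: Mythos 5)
Your construction is correct and is essentially the argument the paper intends: the paper omits the proof by deferring to Han's Lemma 2.1.1, whose proof is exactly this greedy mass-matching between the high-probability atoms of $P_{X^n|Z^n}(\cdot|z^n)$ and $P_{Y^n|Z^n}(\cdot|z^n)$, here carried out pointwise in $z^n$ and averaged over $P_{Z^n}$. Your bookkeeping ($d \leq 2U + 2P_y(A_y)$ with $U \leq P_x(A_x) + e^{-n\gamma}$ via the two cases) correctly reproduces the stated bound.
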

\begin{proof}
    Since this lemma can be easily proved in the same manner as the proof of \cite[Lemma 2.1.1]{hanspringerinformation}, we omit the details.
\end{proof}

The next lemma gives a sufficient condition to simulate the correlation of a pair of RVs from another RV.
\begin{lem}
    \label{lem: sufficient condition of correlation}
    % Let $(\bfX, \bfY, \bfZ)=\{(X^n,Y^n,Z^n)\}_{n=1}^{\infty}$ be a triple of sequences of RVs, where $(X^n,Y^n,Z^n)$ is a triple of RVs on $\cX^n\times\cY^n\times\cZ^n$.
    If $\underline{H}(\bfX|\bfZ)>\overline{H}(\bfY|\bfZ)$,
    there exists a mapping $\varphi_{n}:\cX^n\times\cZ^n\ra\cY^n$ satisfying 
    \begin{align*} 
        \lim_{n\ra\infty} d(P_{Y^nZ^n}, P_{\tilde Y^nZ^n}) = 0,
    \end{align*}
    where $\tilde Y^n = \varphi_{n}(X^n,Z^n)$ and 
    \begin{align*}
        \underline{H}(\bfX|\bfZ) = \pliminf_{n\ra\infty} \frac{1}{n}\log \frac{1}{P_{X^n|Z^n}(X^n|Z^n)}.
    \end{align*}
\end{lem}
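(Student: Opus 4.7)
The plan is to apply Lemma \ref{lem: variational distance of two RVs} directly with carefully chosen parameters $a$ and $\gamma$. Since $\underline{H}(\bfX|\bfZ) > \overline{H}(\bfY|\bfZ)$, the gap between these two quantities is strictly positive, so I can pick a small $\gamma > 0$ together with a real number $a$ satisfying $\overline{H}(\bfY|\bfZ) < a < a + \gamma < \underline{H}(\bfX|\bfZ)$. Concretely, take $\gamma$ to be one third of the gap and place $a$ so that both $a - \overline{H}(\bfY|\bfZ)$ and $\underline{H}(\bfX|\bfZ) - (a + \gamma)$ are at least $\gamma$.

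Next I would verify that each of the three terms in the bound supplied by Lemma \ref{lem: variational distance of two RVs} tends to zero. The third term $2 e^{-n\gamma}$ vanishes trivially because $\gamma > 0$. For the first term, the event $(X^n, Z^n) \notin \cS_n(a + \gamma)$ coincides with $\bigl\{ \frac{1}{n}\log \frac{1}{P_{X^n|Z^n}(X^n|Z^n)} < a + \gamma \bigr\}$; since $a + \gamma < \underline{H}(\bfX|\bfZ)$, the definition of $\pliminf$ in probability forces this probability to zero. Likewise, $(Y^n, Z^n) \notin \cT_n(a)$ is $\bigl\{ \frac{1}{n}\log \frac{1}{P_{Y^n|Z^n}(Y^n|Z^n)} > a \bigr\}$, and because $a > \overline{H}(\bfY|\bfZ)$, the definition of $\plimsup$ in probability makes this probability vanish as well. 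Summing the three vanishing contributions yields the desired limit.

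There is no substantial obstacle here; essentially all the work has been absorbed into Lemma \ref{lem: variational distance of two RVs}. The only point deserving care is the parameter selection, which exploits the strict inequality in the hypothesis to sandwich both $a$ and $a + \gamma$ strictly between the two information-spectrum quantities. If the hypothesis were relaxed to $\underline{H}(\bfX|\bfZ) \geq \overline{H}(\bfY|\bfZ)$, no such positive $\gamma$ could be chosen and the argument would fail, so the strictness of the assumption is essential rather than incidental.
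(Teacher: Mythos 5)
Your argument is correct and is precisely the route the paper intends: apply Lemma \ref{lem: variational distance of two RVs} with $a$ and $\gamma$ chosen so that $\overline{H}(\bfY|\bfZ) < a < a+\gamma < \underline{H}(\bfX|\bfZ)$, and let the definitions of $\plimsup$ and $\pliminf$ kill the first two terms while $2e^{-n\gamma}\to 0$ handles the third. The paper omits these details, deferring to the same construction in Han's Theorem 2.1.1, so you have simply filled in the standard argument correctly.
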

\begin{proof}
    Since this lemma can be easily proved by using Lemma \ref{lem: variational distance of two RVs} and the same manner as the proof of \cite[Theorem 2.1.1]{hanspringerinformation}, we omit the details.
\end{proof}

The next lemma is an extended version of \cite[Lemma 2.1.2]{hanspringerinformation}.
\begin{lem}
    \label{lem: a necessary condition}
    % Let $(\bfX, \bfY, \bfZ)=\{(X^n,Y^n,Z^n)\}_{n=1}^{\infty}$ be a triple of sequences of RVs, where $(X^n,Y^n,Z^n)$ is a triple of RVs on $\cX^n\times\cY^n\times\cZ^n$. Then,
    For any integer $n \geq 1$, any real numbers $\gamma > 0$ and $a\in\mathbb{R}$, and any mapping $\varphi_n:\cX^n\times\cZ^n\ra\cY^n$, it holds that
    \begin{align*}
        d(P_{Y^nZ^n}, P_{\tilde Y^nZ^n})
        &\geq 2 \Pr\{(Y^n, Z^n) \notin \cT_{n}(a + \gamma)\}\\
        &\quad - 2 \Pr\{(X^n, Z^n) \in \cS_{n}(a)\} - 2 e^{-n \gamma},
    \end{align*}
    where $\tilde Y^n = \varphi_{n}(X^n,Z^n)$.
\end{lem}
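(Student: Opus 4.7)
The plan is to start from the variational-distance lower bound $d(P_{Y^nZ^n}, P_{\tilde Y^nZ^n}) \geq 2(P_{Y^nZ^n}(A) - P_{\tilde Y^nZ^n}(A))$, valid for any $A \subseteq \cY^n \times \cZ^n$, and to choose $A$ so that $P_{Y^nZ^n}(A)$ nearly equals $\Pr\{(Y^n, Z^n) \notin \cT_n(a+\gamma)\}$ while $P_{\tilde Y^nZ^n}(A)$ is upper bounded by $\Pr\{(X^n, Z^n) \in \cS_n(a)\}$. The na\"{\i}ve choice $A = \cT_n(a+\gamma)^c$ fails, because there is no direct way to control $P_{\tilde Y^nZ^n}$ on all of $\cT_n(a+\gamma)^c$ by $\Pr\{(X^n,Z^n) \in \cS_n(a)\}$ alone.

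The key auxiliary construction is a conditional image of the ``typical'' $x^n$'s. For each $z^n$, put
\[
    \cV(z^n) \teq \{\varphi_n(x^n, z^n) : (x^n, z^n) \notin \cS_n(a)\} \subseteq \cY^n,
\]
and let $\cC \teq \{(y^n, z^n) : y^n \in \cV(z^n)\}$. Since $(x^n, z^n) \notin \cS_n(a)$ forces $P_{X^n|Z^n}(x^n|z^n) > e^{-na}$, there are at most $e^{na}$ such $x^n$ for each fixed $z^n$, and hence $|\cV(z^n)| \leq e^{na}$.

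From this, two estimates follow. On $\cT_n(a+\gamma)^c \cap \cC$ we have $P_{Y^n|Z^n}(y^n|z^n) < e^{-n(a+\gamma)}$; summing the at most $|\cV(z^n)| \leq e^{na}$ contributions per $z^n$ and then averaging over $Z^n$ gives $P_{Y^nZ^n}(\cT_n(a+\gamma)^c \cap \cC) \leq e^{-n\gamma}$. On the other hand, $(\tilde Y^n, Z^n) \in \cC^c$ means $\varphi_n(X^n, Z^n) \notin \cV(Z^n)$, which by the very definition of $\cV$ can occur only when $(X^n, Z^n) \in \cS_n(a)$; therefore $P_{\tilde Y^nZ^n}(\cC^c) \leq \Pr\{(X^n, Z^n) \in \cS_n(a)\}$. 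Taking $A \teq \cT_n(a+\gamma)^c \setminus \cC$ and substituting these two bounds into the opening variational inequality yields the claim.

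The main obstacle is identifying the right $A$: one has to carve out of $\cT_n(a+\gamma)^c$ the part $\cC$ lying in the image of the typical $x^n$'s, since on $\cC$ the $P_{Y^nZ^n}$-mass is already small by counting, while on its complement $P_{\tilde Y^nZ^n}$ is automatically small by the construction of $\cV$. Aside from the extra conditioning on $Z^n$, the argument mirrors that of \cite[Lemma 2.1.2]{hanspringerinformation}, which is presumably why the paper omits the details.
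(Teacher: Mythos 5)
Your proof is correct, and it is precisely the conditional version of the argument behind \cite[Lemma 2.1.2]{hanspringerinformation} that the paper invokes and omits: carve the atypical set $\cT_n(a+\gamma)^c$ by the image $\cC$ of the high-conditional-probability $x^n$'s, bound $P_{Y^nZ^n}$ on the intersection by counting ($|\cV(z^n)|\leq e^{na}$) and $P_{\tilde Y^nZ^n}$ on the complement by $\Pr\{(X^n,Z^n)\in\cS_n(a)\}$, then apply the variational-distance lower bound. Nothing further is needed.
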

\begin{proof}
    Since this lemma can be easily proved in the same manner as the proof of \cite[Lemma 2.1.2]{hanspringerinformation}, we omit the details.
\end{proof}

According to this lemma, we have the next lemma which is an information spectrum version of the fact that 
\begin{align*}
    H(X | Z) \geq  H(\varphi(X, Z) | Z)
\end{align*}
for any function $\varphi$, where $H(X | Z)$ is the conditional entropy of $X$ given $Z$.
\begin{lem}
    \label{lem: decreasing of the spectrum of function}
    Let $\varphi_{n}: \cX^n \times \cZ^n \ra \cY^n$ be an arbitrary mapping and set $\tilde Y^n = \varphi_{n}(X^n, Z^n)$ and $\tilde \bfY = \{\tilde Y^n\}_{n = 1}^{\infty}$. Then, it holds that
    \begin{align*}
        \overline{H}(\bfX|\bfZ) \geq \overline{H}(\tilde \bfY|\bfZ).
    \end{align*}
\end{lem}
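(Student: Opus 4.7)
The plan is to apply Lemma \ref{lem: a necessary condition} in the degenerate case where the ``target'' sequence of RVs is taken to be $\tilde{\bfY}$ itself. The point is that the mapping $\varphi_n$ in the lemma coincides with the mapping defining $\tilde{Y}^n$, so that the joint distribution $P_{\tilde{Y}^n Z^n}$ (viewed as the target) and $P_{\tilde{Y}^n Z^n}$ (viewed as the image under $\varphi_n$) are identical, and the variational distance on the left-hand side of Lemma \ref{lem: a necessary condition} is exactly $0$.

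Concretely, first I fix arbitrary real numbers $\delta > 0$ and $\gamma > 0$, and set $a \teq \overline{H}(\bfX|\bfZ) + \delta$. Then I apply Lemma \ref{lem: a necessary condition} with $Y^n$ replaced by $\tilde Y^n$, so that the set $\cT_n(a+\gamma)$ is the one defined via $P_{\tilde Y^n|Z^n}$. The left-hand side of the inequality becomes $d(P_{\tilde{Y}^n Z^n}, P_{\tilde{Y}^n Z^n}) = 0$, and rearranging yields
\begin{align*}
\Pr\!\left\{\tfrac{1}{n}\log\tfrac{1}{P_{\tilde Y^n|Z^n}(\tilde Y^n|Z^n)} > a+\gamma\right\}
\leq \Pr\!\left\{\tfrac{1}{n}\log\tfrac{1}{P_{X^n|Z^n}(X^n|Z^n)} \geq a\right\} + e^{-n\gamma}.
\end{align*}

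Next, I argue that the right-hand side vanishes as $n\to\infty$. By the definition of $\plimsup$ in probability, one can find $a' \in (\overline{H}(\bfX|\bfZ), a)$ with $\Pr\{\tfrac{1}{n}\log\tfrac{1}{P_{X^n|Z^n}(X^n|Z^n)} > a'\}\to 0$, which upper bounds the first term on the right. The exponential term $e^{-n\gamma}$ obviously tends to $0$. Therefore
\begin{align*}
\lim_{n\to\infty}\Pr\!\left\{\tfrac{1}{n}\log\tfrac{1}{P_{\tilde Y^n|Z^n}(\tilde Y^n|Z^n)} > a+\gamma\right\} = 0,
\end{align*}
which, by the definition of $\plimsup$, gives $\overline{H}(\tilde{\bfY}|\bfZ) \leq a + \gamma = \overline{H}(\bfX|\bfZ) + \delta + \gamma$.

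Finally, letting $\delta \downarrow 0$ and $\gamma \downarrow 0$ yields $\overline{H}(\tilde{\bfY}|\bfZ) \leq \overline{H}(\bfX|\bfZ)$, as desired. The proof is essentially mechanical once the self-application trick is spotted; the only place that needs minor care is handling the mismatch between the strict/non-strict inequalities defining $\plimsup$ and $\cS_n(\cdot)$, which is resolved by choosing $a$ strictly larger than $\overline{H}(\bfX|\bfZ)$ and then taking a limit. There is no genuine obstacle—Lemma \ref{lem: a necessary condition} is doing all of the heavy lifting.
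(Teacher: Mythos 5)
Your proof is correct and follows exactly the route the paper indicates: apply Lemma \ref{lem: a necessary condition} with the target sequence taken to be $\tilde\bfY$ itself so the variational distance vanishes, then pass to the limit using the definition of $\plimsup$ (this is the argument of Han's Corollary 2.1.2, which the paper cites and omits). The handling of the strict/non-strict inequality mismatch via $a = \overline{H}(\bfX|\bfZ)+\delta$ and an intermediate $a'$ is the right way to fill in the detail the paper leaves out.
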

\begin{proof}
    Since this lemma can be easily proved by using Lemma \ref{lem: a necessary condition} and the same manner as the proof of \cite[Corollary 2.1.2]{hanspringerinformation}, we omit the details.  
\end{proof}

\subsection{Direct Part}
\label{sec: direct part}
In this section, we first show that
\begin{align}
    C_{\rm a}(\epsilon, R) &\leq \inf_{\substack{\hat\bfX: \bfY - \bfX - \hat\bfX, \\  I(\bfY; \hat\bfX) \leq \epsilon, \overline{H}(\hat\bfX|\bfX) \leq R}} c(\bfX,\hat\bfX). \label{equ: direct part}
\end{align}
In other words, we show the direct part of the proof of Theorem \ref{thm: weak average}.

For given $R$ and $\epsilon$, let $\hat\bfX$ be a sequence of RVs such that
\begin{align}
    & \bfY - \bfX - \hat\bfX, \label{equ: direc: markov condition}\\
    & \overline{H}(\hat \bfX|\bfX) \leq R, \label{equ: direc: constraint of R}\\
    & I(\bf Y;\hat \bfX) \leq \epsilon.  \label{equ: direc: constraint of I}
\end{align}
For an arbitrarily fixed $\delta > 0$, let $\{M_n\}_{n=1}^{\infty}$ be a sequence of integers such that
\begin{align}
    M_n = \left \lceil e^{n (R+\delta)} \right \rceil. \label{equ: direct: def of Mn}
\end{align}
Then, we have
\begin{align*}
    \underline{H}(\bfU|\bfX)
    = R+\delta
    > \overline{H}(\hat \bfX|\bfX),
\end{align*}
where $\bfU=\{U_{M_{n}}\}_{n=1}^\infty$ and the inequality comes from \eqref{equ: direc: constraint of R}.
Thus, according to Lemma \ref{lem: sufficient condition of correlation}, there exists a sequence of functions $f_{n}:\cX^n\times \cU_{M_n}\ra \hat \cX^n$ such that
\begin{align*}
    \lim_{n\ra\infty}d(P_{\hat X^n X^n}, P_{\tilde X^n X^n}) = 0,
\end{align*}
where $\tilde X^n = f_{n}(X^n,U_{M_n})$. Since $Y^n - X^n - \hat X^n$ and $Y^n - X^n - \tilde X^n$, we also have
\mathindent=0mm
\begin{align*}
    \lim_{n\ra\infty}d(P_{\hat X^nX^nY^n}, P_{\tilde X^nX^nY^n})
    % &= \lim_{n\ra\infty} \sum_{(y^n,x^n,\hat x^n)\in\cY^n\times\cX^n\times \hat \cX^n}
    % |P_{Y^nX^n\hat X^n}(y^n,x^n,\hat x^n) - P_{Y^nX^n\tilde X^n}(y^n,x^n,\hat x^n)|\\
    % &= \lim_{n\ra\infty} \sum_{(y^n,x^n,\hat x^n)\in\cY^n\times\cX^n\times \hat \cX^n} | P_{X^n\hat X^n}(x^n,\hat x^n)P_{Y^n|X^n}(y^n|x^n)\\
    % &\quad  - P_{X^n\tilde X^n}(x^n,\hat x^n)P_{Y^n|X^n}(y^n|x^n)|\\
    % &= \lim_{n\ra\infty} \sum_{(y^n,x^n,\hat x^n)\in\cY^n\times\cX^n\times \hat \cX^n} |P_{X^n\hat X^n}(x^n,\hat x^n) \\
    % &\quad - P_{X^n\tilde X^n}(x^n,\hat x^n)|P_{Y^n|X^n}(y^n|x^n)\\
    % &= \lim_{n\ra\infty} \sum_{(x^n,\hat x^n)\in\cX^n\times \hat \cX^n}
    % |P_{X^n\hat X^n}(x^n,\hat x^n) - P_{X^n\tilde X^n}(x^n,\hat x^n)|\\
    = \lim_{n\ra\infty}d(P_{\hat X^n X^n}, P_{\tilde X^n X^n}) = 0.
\end{align*}
\mathindent=7mm
Hence from the continuity of the mutual information (see, e.g., \cite[Lemma 2.7]{csiszar2011itc}), we have
\begin{align}
    I(\bfY;\tilde \bfX) = I(\bfY;\hat \bfX) \leq \epsilon, \label{equ: direct: I tilde = I hat}
\end{align}
where $\tilde \bfX = \{\tilde X^n\}_{n = 1}^{\infty}$ and the last inequality comes from \eqref{equ: direc: constraint of I}. We also have
\begin{align}
    &c(\bfX,\tilde \bfX) - c(\bfX,\hat \bfX)\notag\\
    &=\limsup_{n\ra\infty}{\rm E}[c_{n}(X^n,\tilde X^n)]
    - \limsup_{n\ra\infty}{\rm E}[c_{n}(X^n,\hat X^n)]\notag\\
    %  % 
    % &=\limsup_{n\ra\infty}{\rm E}[c_{n}(X^n,\tilde X^n)]
    % +\liminf_{n\ra\infty} - {\rm E}[c_{n}(X^n,\hat X^n)]\notag\\
    % % 
    &\leq \limsup_{n\ra\infty} ({\rm E}[c_{n}(X^n,\tilde X^n)]
    - {\rm E}[c(X^n,\hat X^n)])\notag\\
    % % 
    % &= \limsup_{n\ra\infty}\sum_{(x^n,\hat x^n)\in\cX^n\times\hat \cX^n}
    % P_{X^n\tilde X^n}(x^n,\hat x^n)c_{n}(x^n,\hat x^n)
    % - \sum_{(x^n,\hat x^n)\in\cX^n\times\hat \cX^n}
    % P_{X^n\hat X^n}(x^n,\hat x^n)c_{n}(x^n,\hat x^n)\notag\\
    % % 
    % &= \limsup_{n\ra\infty}\sum_{(x^n,\hat x^n)\in\cX^n\times\hat \cX^n} \left(P_{X^n\tilde X^n}(x^n,\hat x^n)\right.\notag\\
    % &\quad \left. - P_{X^n\hat X^n}(x^n,\hat x^n)\right)
    % c_{n}(x^n,\hat x^n)\notag\\
    % & \leq \limsup_{n\ra\infty}\sum_{(x^n,\hat x^n)\in\cX^n\times\hat \cX^n}
    % \left|P_{X^n\tilde X^n}(x^n,\hat x^n)\right.\notag\\
    % &\quad \left. - P_{X^n\hat X^n}(x^n,\hat x^n)\right| c_{n}(x^n,\hat x^n)\notag\\
    % & \leq \limsup_{n\ra\infty}\sum_{(x^n,\hat x^n)\in\cX^n\times\hat \cX^n}
    % \left|P_{X^n\tilde X^n}(x^n,\hat x^n) \right. \notag\\
    % &\quad \left. - P_{X^n\hat X^n}(x^n,\hat x^n)\right| c_{\rm max}\notag\\
    & \leq \limsup_{n\ra\infty}d(P_{\hat X^n X^n}, P_{\tilde X^n X^n}) c_{\rm max}
    = 0.
    \label{equ: direct: limsup of cost}
\end{align}
According to \eqref{equ: direct: def of Mn}, \eqref{equ: direct: I tilde = I hat}, and \eqref{equ: direct: limsup of cost},
% \begin{align*}
%     & \bfY - \bfX - \hat \bfX,\\
%     & \overline{H}(\hat \bfX|\bfX) \leq R,\\
%     & I(\bf Y;\hat \bfX) \leq \epsilon,
% \end{align*}
there exist $\{M_n\}_{n=1}^\infty$ and $\{f_{n}\}_{n=1}^\infty$ such that
% \begin{align*}
%     \limsup_{n\ra\infty}\frac{1}{n}\log M_n &\leq R+\delta,\\
%     \limsup_{n\ra\infty} \frac{1}{n}I(Y^n;f_{n}(X^n,U_{M_n})) &\leq \epsilon,\\
%     \limsup_{n\ra\infty} \mathrm{E}[c_{n}(X^n,f_{n}(X^n,U_{M_n}))]  &\leq c(\bfX, \hat \bfX)
% \end{align*}
\begin{align*}
    \limsup_{n\ra\infty}\frac{1}{n}\log M_n &\leq R+\delta,\\
    I(\bfY;\tilde \bfX) &\leq \epsilon,\\
    c(\bfX,\tilde \bfX) &\leq c(\bfX, \hat \bfX)
\end{align*}
for any sequence $\hat \bfX$ of RVs satisfying \eqref{equ: direc: markov condition}, \eqref{equ: direc: constraint of R}, and \eqref{equ: direc: constraint of I}.
This means that $(R+\delta, c(\bfX,\hat \bfX))$ is $\epsilon$-weakly achievable for any $\delta>0$. Then, by using the usual diagonal line argument \cite{hanspringerinformation}, we can show that $(R, c(\bfX,\hat \bfX))$ is also $\epsilon$-weakly achievable. This implies \eqref{equ: direct part}.
% \begin{align*}
%     C_{\rm w}(\epsilon, R) &\leq \inf_{\substack{\hat\bfX: \bfY - \bfX - \hat\bfX, \\ I(\bfY; \hat\bfX) \leq \epsilon, \overline{H}(\hat\bfX|\bfX) \leq R}} \overline{c}(\bfX,\hat\bfX).
% \end{align*}
% This completes the direct part.

For the same RV $\tilde X^n = f_{n}(X^n,U_{M_n})$ as above, we have
\mathindent=0mm
\begin{align*}
    &\limsup_{n\ra\infty}\Pr\{c_{n}(X^n, \tilde X^n)>\alpha\} - \limsup_{n\ra\infty}\Pr\{c_{n}(X^n, \hat X^n)>\alpha\}\notag\\
    % &=\limsup_{n\ra\infty}\Pr\{c_{n}(X^n, \tilde X^n)>\alpha\} + \liminf_{n\ra\infty} (- \Pr\{c_{n}(X^n, \hat X^n)>\alpha\})\notag\\
    &\leq \limsup_{n\ra\infty} \left(\Pr\{c_{n}(X^n, \tilde X^n)>\alpha\} - \Pr\{c_{n}(X^n, \hat X^n)>\alpha\}\right)\notag\\
    % &=\limsup_{n\ra\infty} \left(\sum_{(x^n, \hat x^n) \in \cX^n \times \hat \cX^n} P_{X^n\tilde X^n}(x^n, \hat x^n)\mathbf{1}\{c_{n}(x^n, \hat x^n)>\alpha\} - \sum_{(x^n, \hat x^n) \in \cX^n \times \hat \cX^n} P_{X^n\hat X^n}(x^n, \hat x^n)\mathbf{1}\{c_{n}(x^n, \hat x^n)>\alpha\} \right)\notag\\
    &=\limsup_{n\ra\infty}\sum_{(x^n, \hat x^n) \in \cX^n \times \hat \cX^n} \left(P_{X^n\tilde X^n}(x^n, \hat x^n) - P_{X^n\hat X^n}(x^n, \hat x^n) \right)\notag\\
    &\quad \times \mathbf{1}\{c_{n}(x^n, \hat x^n)>\alpha\}\notag\\
    &\leq \limsup_{n\ra\infty}\sum_{(x^n, \hat x^n) \in \cX^n \times \hat \cX^n} |P_{X^n\tilde X^n}(x^n, \hat x^n) - P_{X^n\hat X^n}(x^n, \hat x^n)|\notag\\
    &\quad \times \mathbf{1}\{c_{n}(x^n, \hat x^n)>\alpha\}\notag\\
    % &\leq\limsup_{n\ra\infty} \sum_{(x^n, \hat x^n) \in \cX^n \times \hat \cX^n} |P_{X^n\tilde X^n}(x^n, \hat x^n) - P_{X^n\hat X^n}(x^n, \hat x^n)|\notag\\
    % &=\limsup_{n\ra\infty} \sum_{(x^n, \hat x^n) \in \cX^n \times \hat \cX^n} |P_{X^n\hat X^n}(x^n, \hat x^n) - P_{X^n\tilde X^n}(x^n, \hat x^n) |\notag\\
    % &=\limsup_{n\ra\infty} \sum_{(x^n, \hat x^n) \in \cX^n \times \hat \cX^n} |P_{\hat X^n X^n}(\hat x^n, x^n) - P_{\tilde X^n X^n}(\hat x^n, x^n) |\notag\\
    &\leq \limsup_{n\ra\infty} d(P_{\hat X^n X^n}, P_{\tilde X^n X^n})
    = 0.
\end{align*}
\mathindent=7mm
Thus, we have
\mathindent=0mm
\begin{align}
    \limsup_{n\ra\infty}\Pr\{c_{n}(X^n, \tilde X^n)>\alpha\} \leq \limsup_{n\ra\infty}\Pr\{c_{n}(X^n, \hat X^n)>\alpha\}.
    \label{equ: limsup tilde cn >= limsup hat cn}
\end{align}
Similarly, we also have
\mathindent=0mm
\begin{align}
    \limsup_{n\ra\infty}\Pr\{c_{n}(X^n, \hat X^n)>\alpha\} \leq \limsup_{n\ra\infty}\Pr\{c_{n}(X^n, \tilde X^n)>\alpha\}.
    \label{equ: limsup tilde cn <= limsup hat cn}
\end{align}
\mathindent=7mm
By combining \eqref{equ: limsup tilde cn >= limsup hat cn} and \eqref{equ: limsup tilde cn <= limsup hat cn}, we have
\mathindent=0mm
\begin{align}
    \limsup_{n\ra\infty}\Pr\{c_{n}(X^n, \tilde X^n)>\alpha\} = \limsup_{n\ra\infty}\Pr\{c_{n}(X^n, \hat X^n)>\alpha\}.
    \label{equ: limsup tilde cn = limsup hat cn}
\end{align}
\mathindent=7mm
Hence, we have
\begin{align}
    \overline{c}(\bfX, \tilde \bfX)
    &= \inf \{\alpha: \lim_{n\ra\infty}\Pr\{c_{n}(X^n, \tilde X^n)>\alpha\} = 0\}\notag\\
    &= \inf \{\alpha: \lim_{n\ra\infty}\Pr\{c_{n}(X^n, \hat X^n)>\alpha\} = 0\}\notag\\
    &= \overline{c}(\bfX, \hat \bfX), \label{equ: direct: plimsup of cost}
\end{align}
where the second equality comes from \eqref{equ: limsup tilde cn = limsup hat cn}. By replacing \eqref{equ: direct: limsup of cost}, $c(\bfX, \tilde \bfX)$, and $c(\bfX, \hat \bfX)$ with \eqref{equ: direct: plimsup of cost}, $\overline{c}(\bfX, \tilde \bfX)$, and $\overline{c}(\bfX, \hat \bfX)$, respectively, and repeating the same argument as above, we also have
\begin{align*}
    C_{\rm w}(\epsilon, R) &\leq \inf_{\substack{\hat\bfX: \bfY - \bfX - \hat\bfX, \\  I(\bfY; \hat\bfX) \leq \epsilon, \overline{H}(\hat\bfX|\bfX) \leq R}} \overline{c}(\bfX,\hat\bfX).
\end{align*}
This is the direct part of the proof of Theorem \ref{thm: weak worst-case}.

\begin{rem}
    % Since we adopt the weak independence criterion, the first equality in \eqref{equ: direct: I tilde = I hat} holds by using the continuity. However, if we adopt the strong independence criterion (i.e., $\limsup_{n\ra\infty} I(Y^n; f_{n}(X^n,U_{M_n})) \leq \epsilon$), this equality does not always hold. Thus, the above proof is invalid for the strong criterion. This is one of the reasons why we employ a stochastic encoder in Section \ref{sec: strong criterion}.
    Since $I(\bfY;\tilde \bfX)$ and $I(\bfY;\hat \bfX)$ are mutual information normalized by the blocklength, the first equality in \eqref{equ: direct: I tilde = I hat} holds by using the continuity. However, for the mutual information itself, the equality $\limsup_{n \ra \infty} I(Y^n; \tilde X^n) = \limsup_{n \ra \infty} I(Y^n; \hat X^n)$ is no longer guaranteed. Thus, the above proof may be invalid under the strong independence criterion. This is one of the reasons why we employ a stochastic encoder in Section \ref{sec: strong criterion}.
\end{rem}
\begin{rem}
    Since \cite[Lemma 2.7]{csiszar2011itc} holds only for finite sets, the finiteness of sets $\cY$ and $\hat \cX$ is necessary to show the first equality in \eqref{equ: direct: I tilde = I hat}. If $\cY$ and $\hat \cX$ are countably infinite sets and the equality holds even for these sets, the direct part also holds for these sets. We also note that the finiteness of $\cX$ is actually unnecessary.
\end{rem}

\subsection{Converse Part}
\label{sec: converse part}
In this section, we first show that
\begin{align}
    C_{\rm a}(\epsilon, R) &\geq \inf_{\substack{\hat\bfX: \bfY - \bfX - \hat\bfX, \\ I(\bfY; \hat\bfX) \leq \epsilon, \overline{H}(\hat\bfX|\bfX) \leq R}} c(\bfX,\hat\bfX). \label{equ: conv part}
\end{align}
In other words, we show the converse part of the proof of Theorem \ref{thm: weak average}.

If $(R, \Gamma)$ is $\epsilon$-weakly achievable, there exist sequences of integers $\{M_n\}_{n=1}^{\infty}$ and encoders $\{f_{n}\}_{n=1}^\infty$ such that
\begin{align}
    \limsup_{n\ra\infty}\frac{1}{n}\log M_n &\leq R, \label{equ: conv: inequality of rate}\\
    I(\bfY;\hat \bfX) &\leq \epsilon, \label{equ: conv: inequality of epsilon}\\
    c(\bfX,\hat \bfX) &\leq \Gamma, \label{equ: conv: inequality of cost}
\end{align}
where $\hat \bfX = \{\hat X^n\}_{n=1}^\infty$ and $\hat X^n=f_{n}(X^n,U_{M_n})$.

According to Lemma \ref{lem: decreasing of the spectrum of function}, we have
\begin{align*}
    \overline{H}(\bfU|\bfX) \geq \overline{H}(\hat \bfX|\bfX).
\end{align*}
On the other hand, due to \eqref{equ: conv: inequality of rate}, we have
\begin{align*}
    \overline{H}(\bfU|\bfX) \leq R.
\end{align*}
Thus, we have
\begin{align}
    \overline{H}(\hat \bfX|\bfX) \leq R.
    \label{equ: conv: condition of rate}
\end{align} 

Now, by combining \eqref{equ: conv: inequality of epsilon}, \eqref{equ: conv: inequality of cost}, \eqref{equ: conv: condition of rate}, and the fact that $\hat \bfX$ satisfies $\bfY - \bfX - \hat \bfX$, we have
\begin{align*}
    \Gamma \geq c(\bfX,\hat \bfX)
    \geq \inf_{\substack{\hat\bfX: \bfY - \bfX - \hat\bfX, \\  I(\bfY; \hat\bfX) \leq \epsilon, \overline{H}(\hat\bfX|\bfX) \leq R}} c(\bfX,\hat\bfX).
\end{align*}
Since this inequality holds for any $\Gamma$ such that $(R,\Gamma)$ is $\epsilon$-weakly achievable, we have \eqref{equ: conv part}.

By replacing $c(\bfX,\hat\bfX)$ with $\overline{c}(\bfX,\hat\bfX)$ and repeating the same argument as above, we also have
\begin{align}
    C_{\rm w}(\epsilon, R) &\geq \inf_{\substack{\hat\bfX: \bfY - \bfX - \hat\bfX, \\ I(\bfY; \hat\bfX) \leq \epsilon, \overline{H}(\hat\bfX|\bfX) \leq R}} \overline{c}(\bfX,\hat\bfX). \label{equ: conv part worst}
\end{align}
This is the converse part of the proof of Theorem \ref{thm: weak worst-case}.

\begin{rem}
    Unlike the direct part, we do not use the continuity of the mutual information in the converse part. Thus, the proof of this part is valid even if we adopt the strong independence criterion.
\end{rem}
\begin{rem}
    Since we do not use the finiteness of sets $\cX$, $\cY$, and $\hat \cX$ in the converse part, this part holds even if these sets are countably infinite.
\end{rem}

\section{Conclusion}
\label{sec: conclusion}
In this paper, we introduced the information erasure model and considered minimum costs under the weak and the strong independence criteria. For the weak independence criterion, we characterized the minimum average and the minimum worst-case costs for general sources by using information-spectrum quantities. On the other hand, for the strong independence criterion, we gave a single-letter characterization of the minimum average cost for stationary memoryless sources. By using this characterization, we gave a sufficient condition such that the optimal method of erasure is to overwrite the source sequence with repeated identical symbols.

\section*{Acknowledgment}
The authors would like to thank Prof.~H.~Yamamoto for teaching us his result \cite{1056749}, Prof.~Y.~Oohama for teaching us the paper \cite{32120}, Prof.~H.~Yagi for teaching us the paper \cite{8007053}, and the anonymous reviewers for their valuable comments.

\appendix
\def\thesection{Appendix \Alph{section}}
\section{Continuity at $\epsilon = 0$}
\label{sec: appendix: continuity}
In this appendix, we show that
\mathindent=0mm
\begin{align}
    \lim_{\epsilon \downarrow 0} \min_{\substack{\hat X: Y-X-\hat X,\\ I(Y;\hat X)\leq \epsilon}} \mathrm{E}[c(X,\hat X)] = \min_{\substack{\hat X: Y-X-\hat X,\\ I(Y;\hat X) = 0}} \mathrm{E}[c(X,\hat X)].
    \label{equ: appendix: disired equality}
\end{align}
\mathindent=7mm

Let $\{\epsilon_n\}_{n = 1}^{\infty}$ be a sequence such that $\epsilon_{n} > 0$ and $\epsilon_n \ra 0$. Then, we have
\begin{align}
    \lim_{\epsilon \downarrow 0} \min_{\substack{\hat X: Y-X-\hat X,\\ I(Y;\hat X)\leq \epsilon}} \mathrm{E}[c(X,\hat X)] = \lim_{n \ra \infty} \min_{\substack{\hat X: Y-X-\hat X,\\ I(Y;\hat X)\leq \epsilon_{n}}} \mathrm{E}[c(X,\hat X)].
    \label{equ: appendix: e down 0 = lim en}
\end{align}

Let $P_{\hat X^{(n)}|X}: \cX \ra \hat \cX$ be a conditional probability distribution such that
\begin{align}
    \mathrm{E}[c(X,\hat X^{(n)})] &= \min_{\substack{\hat X: Y-X-\hat X,\\ I(Y;\hat X)\leq \epsilon_n}} {\rm E}[c(X,\hat X)],
    \label{equ: appendix: Pn achieves min cost}\\
    I(Y;\hat X^{(n)}) &\leq \epsilon_n.
    \label{equ: appendix: mutual condition}
\end{align}
Then, for the sequence $\{P_{\hat X^{(n)}|X}\}_{n = 1}^{\infty}$, there exists a convergent subsequence $\{P_{\hat X^{(n_{k})}|X}\}_{k = 1}^{\infty}$ such that $P_{\hat X^{(n_k)}|X} \ra P_{\tilde X|X}$ ($k \ra \infty$), where $P_{\tilde X|X}: \cX \ra \hat \cX$ is also a conditional probability distribution. Then, by the continuity, we have
\begin{align*}
    \mathrm{E}[c(X,\tilde X)]
    &= \lim_{k \ra \infty} \mathrm{E}[c(X, \hat X^{(n_k)})]\\
    &\eqo{(a)} \lim_{\epsilon \downarrow 0} \min_{\substack{\hat X: Y-X-\hat X,\\ I(Y;\hat X)\leq \epsilon}} \mathrm{E}[c(X,\hat X)],
\end{align*}
and
\begin{align*}
    I(Y; \tilde X) &\eqo{(b)} \lim_{k \ra \infty} I(Y; \hat X^{(n_k)})\\
    &\leqo{(c)} \lim_{k \ra \infty} \epsilon_{n_k} = 0,
\end{align*}
where (a) comes from \eqref{equ: appendix: e down 0 = lim en} and \eqref{equ: appendix: Pn achieves min cost}, (b) comes from \cite[Lemma 2.7]{csiszar2011itc} and the finiteness of $\cY$ and $\hat \cX$, and (c) comes from \eqref{equ: appendix: mutual condition}. Thus, we have
\mathindent=0mm
\begin{align}
    \min_{\substack{\hat X: Y-X-\hat X,\\ I(Y;\hat X) = 0}} \mathrm{E}[c(X,\hat X)]
    &\leq \mathrm{E}[c(X,\tilde X)] \notag\\
    &= \lim_{\epsilon \downarrow 0} \min_{\substack{\hat X: Y-X-\hat X,\\ I(Y;\hat X)\leq \epsilon}} \mathrm{E}[c(X,\hat X)].
    \label{equ: appendix: e down 0 leq lim en}
\end{align}
\mathindent=7mm

Since the opposite direction is obvious, we have \eqref{equ: appendix: disired equality} from \eqref{equ: appendix: e down 0 leq lim en}.


\begin{thebibliography}{10}

    \bibitem{matsuta2016omc}
    T.~Matsuta and T.~Uyematsu, ``On the minimum cost to erase information: An
    information theoretic approach,'' Proc. 39th Symp. on Inf. Theory and its
    Apps. (SITA2016), pp.176--181, Dec.\ 2016.

    \bibitem{matsuta2017minimum}
    T.~Matsuta and T.~Uyematsu, ``On the minimum worst-case cost and the minimum
    average cost to erase information,'' Proc. 2017 IEEE Inf. Theory Workshop,
    pp.254--258, Nov.\ 2017.

    \bibitem{usdept1995national}
    {U. S. Department of Defense}, 5220.22-M National Industrial Security Program
    Operating Manual, Jan.\ 1995.

    \bibitem{gutmann1996secure}
    P.~Gutmann, ``Secure deletion of data from magnetic and solid-state memory,''
    Proc. Sixth USENIX Security Symp., San Jose, CA, pp.77--90, July\ 1996.

    \bibitem{schneier1996applied}
    B.~Schneier, Applied Cryptography: Protocols, Algorithms, and Source Code in C,
    John Wiley \& Sons, Inc., New York, NY, USA, 1996.

    \bibitem{usair1998air}
    {U. S. Air Force}, Air Force System Security Instruction 5020, 1998.

    \bibitem{usnatio2006special}
    {U. S. National Institute of Standards and Technology}, Special Publication
    800-88: Guidelines for Media Sanitization, Sep.\ 2006.

    \bibitem{hanspringerinformation}
    T.~S.~Han, {Information-Spectrum Methods in Information Theory}, Springer,
    2003.

    \bibitem{1056749}
    H.~Yamamoto, ``A source coding problem for sources with additional outputs to
    keep secret from the receiver or wiretappers (corresp.),'' IEEE Trans. Inf.
    Theory, vol.29, no.6, pp.918--923, Nov.\ 1983.

    \bibitem{8007053}
    K.~Kalantari, L.~Sankar, and O.~Kosut, ``On information-theoretic privacy with
    general distortion cost functions,'' Proc. IEEE Int. Symp. on Inf. Theory,
    pp.2865--2869, June\ 2017.

    \bibitem{verduhan1994gfc}
    S.~Verd\'u and T.~S.~Han, ``A general formula for channel capacity,'' IEEE
    Trans. Inf. Theory, vol.40, no.4, pp.1147--1157, Jul.\ 1994.

    \bibitem{cover2006eit}
    T.~M.~Cover and J.~A.~Thomas, {Elements of Information Theory}, 2nd~ed., Wiley,
    New York, 2006.

    \bibitem{32120}
    T.~Berger and R.W. Yeung, ``Multiterminal source encoding with encoder
    breakdown,'' IEEE Trans. Inf. Theory, vol.35, no.2, pp.237--244, Mar.\ 1989.

    \bibitem{csiszar2011itc}
    I.~Csisz\'ar and J.~K\"orner, {Information Theory: Coding Theorems for Discrete
      Memoryless Systems}, 2nd~ed., Cambridge University Press, 2011.

\end{thebibliography}
\end{document}